\documentclass[conference]{IEEEtran}


\usepackage{times}
\usepackage{algorithmic}
\usepackage{amsmath}
\usepackage{amssymb}
\usepackage{amsthm}
\usepackage{color}
\usepackage{colortbl}
\usepackage{cite}
\usepackage{verbatim}
\usepackage[pdftex]{graphicx}
\usepackage{caption}
\usepackage{subcaption}
\usepackage{url}



\theoremstyle{plain}
\newtheorem{theorem}{Theorem}
\newtheorem{lemma}{Lemma}
\newtheorem{claim}{Claim}

\theoremstyle{definition}
\newtheorem{example}{Example}

\theoremstyle{remark}

\makeatletter
\def\thm@space@setup{\thm@preskip=3pt
\thm@postskip=2pt}
\makeatother

\newtheoremstyle{remboldstyle}
  {}{}{}{}{\bfseries}{.}{.5em}{{\thmname{#1 }}{\thmnumber{#2}}{\thmnote{ (#3)}}}
\theoremstyle{remboldstyle}
\newtheorem{rembold}{Remark}

\newtheoremstyle{defstyle}
  {}{}{}{}{\bfseries}{.}{.5em}{{\thmname{#1 }}{\thmnumber{#2}}{\thmnote{ (#3)}}}
\theoremstyle{defstyle}
\newtheorem{defbold}{Definition}

\newtheoremstyle{remboldstyle}
  {}{}{}{}{\bfseries}{.}{.5em}{{\thmname{#1 }}{\thmnumber{#2}}{\thmnote{ (#3)}}}
\theoremstyle{remboldstyle}
\newtheorem{assumption}{Assumption}

\newcommand{\beq}{\begin{eqnarray}}
\newcommand{\eeq}{\end{eqnarray}}
\newcommand{\n}{\nonumber}

\newcommand{\field}[1]{\mathbb{#1}}

\newcommand{\F}{\field{F}}

\newcommand{\cC}{{\cal C}}


\newfont{\bbb}{msbm10 scaled 500}

\newfont{\bb}{msbm10 scaled 1100}

\newcommand{\ZZ}{\mbox{\bb Z}}


\newcommand{\cv}{{\bf c}}

\newcommand{\mv}{{\bf m}}

\newcommand{\pv}{{\bf p}}


\newcommand{\Gm}{{\bf G}}

\newcommand{\Id}{{\bf I}}

\newcommand{\Lm}{{\bf L}}

\newcommand{\Rm}{{\bf R}}


\newcommand{\Ac}{{\cal A}}
\newcommand{\Bc}{{\cal B}}
\newcommand{\Cc}{{\cal C}}

\newcommand{\Ec}{{\cal E}}

\newcommand{\Kc}{{\cal K}}
\newcommand{\Lc}{{\cal L}}

\newcommand{\Nc}{{\cal N}}

\newcommand{\Pc}{{\cal P}}

\newcommand{\Rc}{{\cal R}}
\newcommand{\Sc}{{\cal S}}

\newcommand{\Xc}{{\cal X}}


\IEEEoverridecommandlockouts

\definecolor{OXO-emph}{RGB}{153,0,0}

\newcommand\ceilb[1]{\left\lceil #1 \right\rceil}

\newcommand{\algrule}[1][.2pt]{\par\vskip.5\baselineskip\hrule height #1\par\vskip.5\baselineskip}
\DeclareMathAlphabet{\mathpzc}{OT1}{pzc}{m}{it}

\setlength{\textfloatsep}{8pt}
\setlength{\dbltextfloatsep}{8pt}

\begin{document}

\sloppy

\title{Locality and Availability in Distributed Storage}
%
\author{Ankit Singh Rawat, Dimitris S. Papailiopoulos, Alexandros G. Dimakis, and Sriram Vishwanath
\thanks{The authors are with the Dept. of ECE, The University of Texas at Austin, Austin, TX 78751 USA. E-mail: \{ankitsr, dimitris\}@utexas.edu, \{dimakis, sriram\}@austin.utexas.edu.}
\thanks{The authors would like to thank Arya Mazumdar and Natalia Silberstein for valuable discussions.} }



\maketitle


\begin{abstract}
This paper studies the problem of code symbol availability: a code symbol is said to have $(r, t)$-availability if it can be reconstructed from $t$ disjoint groups of other symbols, each of size at most $r$. For example, $3$-replication supports $(1, 2)$-availability as each symbol can be read from its $t= 2$ other (disjoint) replicas, {\em i.e.}, $r=1$. However, the rate of replication must vanish like $\frac{1}{t+1}$ as the availability increases.	

This paper shows that it is possible to construct codes that can support a \textit{scaling number of parallel reads while keeping the rate to be an arbitrarily high constant}. It further shows that this is possible with the minimum distance arbitrarily close to the Singleton bound. This paper {\color{black}also} presents a bound demonstrating a trade-off between minimum distance, availability and locality. Our codes  match  the aforementioned bound and their construction relies on combinatorial objects called resolvable designs.

From a practical standpoint, our codes seem useful for distributed storage applications involving hot data, {\em i.e.}, the information which is frequently accessed by multiple processes in parallel.
\end{abstract}





\section{Introduction}

The simplest way of introducing redundancy in distributed storage systems is $3$-replication, where three replicas of each data block are created. This makes it possible for three parallel reads for each data block. In this paper we introduce a new property that we call \textit{availability} that ensures $t+1$ parallel reads for each data block. We are also concerned with the \textit{locality} $r$ of each read, which measures how many blocks must be read before the desired block can be reconstructed. In this language, $3$-replication allows $t + 1=3$ parallel reads for each block, each with locality $r=1$. However, as we increase the availability $t$ by increasing the replication factor, the rate vanishes like $\frac{1}{t+1}$.

We show that it is possible to construct codes that can support a \textit{scaling number of parallel reads while keeping the rate to be an arbitrarily high constant}. Specifically, one of our constructions 
results in codes of dimension $k$ with availability $t=\Theta(k^{1/3-\epsilon})$ where each read has locality $r=\Theta(k^{1/3})$ for any rate. We further show that this is possible while keeping the minimum distance arbitrarily close to the Singleton bound.

The main motivation for this new property is the application of erasure codes for hot data. 
Current distributed storage systems use various forms of redundancy ranging from block replication to traditional and modern storage codes. It is now well understood that classical codes (such as Reed-Solomon) are highly suboptimal for distributed settings due to the repair problem~\cite{dimakis}. Several 
storage codes have been recently developed, each optimized for a different repair cost metric.
Codes that optimize the number of bits communicated during repairs (a quantity called \textit{repair bandwidth}) were developed, for example, in \cite{dimakis, RSK11,zigzag13, cadambe2011optimal, PapDimCad_hadamard,rashmi2013solution} and references therein. Codes with small disk-I/O were studied in \cite{khan2011search,zigzag13}. Finally, codes that minimize the number of nodes that participate in the repair process, a quantity called {\it locality}, were studied in~\cite{HLM2007,pyramid,Gopalan12, SimpleRegen, PapDim12, PKLK12, SRV12, RKSV12, KPLK12,TamoPapDim13, forbes, cadambe_mazumdar}.
Some of these results have found their way into practice: codes with small locality were recently deployed in Azure production clusters~\cite{azure12}, 
 while others have been tested in Facebook clusters\cite{XORing,rashmi2013solution}.

Code designs with small repair bandwidth and locality are attractive for archival and cold data. 
This is information that is rarely accessed or modified, usually involving back-end systems that store 
massive logs for analytics or backups. It turns out that in these applications there are very large volumes of cold data that must be safely retained.

Another significant family of storage problems involves the management of \textit{hot data}. This is frequently accessed information, often in front-end systems facing end-users. For these applications 
data blocks are frequently accessed, in some cases concurrently by multiple system jobs. To the best of our knowledge, there has been little work on the potential benefits of coding for hot data. A notable exception is
the recent line of work from \cite{shah2013redundant,joshi2013delay,liang2013tofec} and references therein 
that combines queuing theory with coding theory for distributed storage systems. In this paper we explore the 
orthogonal direction of providing multiple ways to reconstruct a single block by reading few other blocks, in parallel. This can be potentially combined with queing theoretic models to analyze performance benefits. 

\noindent {\bf Our Contributions:}  We generalize the definition of locally repairable codes (LRCs): 
an $(n,k,r,t)$-LRC is a systematic code that encodes $k$ information symbols to codewords of length $n$, 
%
so that each of the information symbols has locality $r$, \textit{i.e.} it is a function of $r$ other code symbols. An LRC with {\em all-symbol locality} supports locality $r$ for the parity symbols as well.
In an $(n,k,r,t)$-LRC, for every systematic symbol there exists $t$ disjoint groups, each containing at most $r$ coded symbols that can be used to reconstruct it. We then say that the information symbols have $(r, t)$-availability: {\em i.e.},  each information symbol can be reconstructed by accessing $t$ disjoint groups of other code symbols, each of size at most $r$.

Our first result is a bound on the minimum distance of linear $(n,k,r,t)$-LRCs, where each repair group contains one local parity.
~We then proceed with constructing codes that are optimal with respect to this bound.~
We establish an achievability result, conditional on the existence of certain combinatorial structure for given code 
structures. 
We show that for sub-linear locality and availability $r=\Theta(\frac{\log(k)}{\log\log(k)})$ or $\Theta(k^{1/3})$, and $t =\Theta(r^{1-\epsilon})$, optimal
$(n,k,r,t)$-LRCs exist.
These new codes not only have low locality and high availability, but are arbitrarily high-rate, and have distance asymptotically equal to that of an $(n,k)$ {\color{black}maximum-distance separable} (MDS) code.

{\it Prior work on codes with $(r,t)$-availability}:
In \cite{LlHolOgg2013},  Pamies-Juarez {\em et al.} use projective geometries to construct good codes that enable multiple disjoint repair groups. Asteris {\em et al.} study the availability for repairable fountain codes in~\cite{asteris}.
Tamo {\em et al.} recently constructed codes with good locality and availability properties~\cite{TamoBerg}. 
In a parallel work, Wang {\em et al.} present a more general upper bound on minimum distance of {\em linear codes} with $(r,t)$-availability~\cite{WangZhang}.
They further show the existence of codes that achieve the bound in the asymptotically zero rate regime. 
The tightness of this bound  is open in the general case.
We briefly discuss their bound in Sec.~\ref{sec:bound}, where we extend it to non-linear codes. Batch codes~\cite{batchcodes} also enable parallel reads, in an even stronger sense of allowing multiple reads of different blocks simultaneously. However, current work on Batch codes does not consider fault tolerance. It is interesting to investigate distance properties of Batch codes as we discuss in the conclusions. 

\section{LRCs with $(r,t)$-Availability}
\label{sec:def}
We now formally define $(n,k,r,t)$-LRCs. 
In this paper, we only consider  systematic codes of length $n$, where without loss of generality the first $k$ symbols of a codeword $\cv = (c_1, c_2,\ldots, c_n) $ denote the information symbols. For a positive integer $n$, we use $[n]$ to denote the set $\{1, 2,\ldots, n\}$.

\begin{defbold}
\label{rem:def11}
An $(n, k, r, t)$-LRC  satisfies the following three properties:
\begin{enumerate}
\item For each information (systematic) symbol $c_i$, $i \in [k]$, there exist $t$ subsets $\Gamma_1(i),\ldots, \Gamma_t(i) \subset [n]\backslash \{i\}$, such that $c_i$ is a function of the code symbols indexed by $\Gamma_j(i)$, {\em i.e.}, $\cv_{\Gamma_j(i)}$.
\item $|\Gamma_j(i)| \leq r$, for all $i \in [k]$, $j \in [t]$.
\item $\Gamma_j(i) \cap \Gamma_l(i) =  \emptyset$, for all $i\in[k]$ and $j \neq l \in [t]$.
\end{enumerate}
\end{defbold} 

In the following example, we present a $(7, 3, 2, 2)$-LRC which describes various requirements presented in Definition~\ref{rem:def11}.

\begin{example}
\label{ex:lrc_1}
Consider a systematic code which encodes $3$ infromation symbols $(m_1, m_2, m_3)$ to a codeword 
$\cv = (c_1,\ldots, c_7)$ of length $7$ such that
\begin{align}
 \cv = (m_1, m_2, m_3, m_1, m_1 + m_2, m_2+m_3, m_1+m_3). \nonumber
\end{align}
This code is a $(7, 3, 2, 2)$-LRC as it satisfies the three requirements of Definition~\ref{rem:def11} with 
\begin{align}
\Gamma_1(1) = \{4\}, \Gamma_2(1) = \{2, 5\}, \nonumber \\
\Gamma_1(2) = \{1, 5\}, \Gamma_2(2) = \{3, 6\}, \nonumber \\
\Gamma_1(3) = \{2, 6\}, \Gamma_2(3) = \{1, 7\}. \nonumber
\end{align}
In particular, both $\cv_{\Gamma_1(1)} = c_{4} = m_1$ and $\cv_{\Gamma_2(1)} = (c_{2}, c_{5}) = (m_2, m_1 + m_2)$ can be used to obtain the first information symbol $m_1$. 
\end{example}

It follows from Definition~\ref{rem:def11} that an $(n, k, r, t)$-LRC supports $(r,t)$-availability:
any information symbol can be recovered in parallel by accessing itself and then by accessing code symbols indexed by $t$ disjoint repair groups $\Gamma_1(\cdot),\ldots, \Gamma_t(\cdot)$ associated with it. In terms of locality, Definition~\ref{rem:def11} ensures locality $r$ for the information symbols, {\em i.e.}, {\em information-symbol locality}. If an $(n, k, r, t)$-LRC allows for locality of $r$ for all $n$ coded symbols, {\em i.e.}, {\em all-symbol locality}, then it is referred to as an $(n, k, r, t)$-LRC with all-symbol locality. Throughout this paper we consider codes with $(r,t)$-availability {\it only} for the information symbols.

In the next two sections, we establish new distance bounds for  $(n,k,r,t)$-LRCs and then proceed with presenting optimal code constructions based on certain combinatorial structures.

Before we proceed, let us introduce some notation. Let $m$ be the total number of distinct subsets (local groups) of the type $\Gamma_j(i) \cup \{i\}$, according to Definition~\ref{rem:def11}. We use a $k \times m$ {\it membership} matrix of $0$s and $1$s, call it $\mathbf{R}$, to denote the information symbols participating in these $m$ subsets of $[n]$:
a $1$ in the entry $(i,j)$ of ${\bf R}$ means that $i \in [k]$ participates in the $j$-th local group. Hence, each row of ${\bf R}$ indexes a systematic symbol and each column a local group. For the code described in Example~\ref{ex:lrc_1}, we have $4$ distinct subsets (local groups) of the form $\Gamma_j(i) \cup \{i\}$:
\begin{align}
\{1, 4\}, \{1, 2, 5\}, \{2, 3, 6\}, \{1, 3, 7\}.
\end{align}
These correspond to the $3 \times 4$ matrix
\begin{align}
\Rm = \left(\begin{array}{cccc}
1 & 1 &0 & 1 \\
0 & 1 & 1 & 0 \\
0 & 0 &1 &1 \end{array} \right). \n
\end{align}

\section{Upper Bound on Minimum Distance of $(n, k, r, t)$-LRCs}
\label{sec:bound}

Here, we present upper bound on the distance of linear $(n, k, r, t)$-LRCs.
In our main theorem, we assume a simple condition for the codes:
each repair group ${\Gamma_j(i)}$ contains {\it only} 1 parity symbol.
This condition is later lifted, and a more general bound is presented.
Before stating our results, we present a definition of the minimum distance of a code.

\begin{defbold}
\label{def:dmin}	
The minimum distance of a code $\mathcal{C}$ is equal to 
\begin{equation}
d_{\min}(\Cc) =n-|\Sc^*|,
\label{eq:dmin_def}
\end{equation}
where  $\mathcal{S}^{*} \subset [n]$ denotes a maximum cardinality set such that the encoded symbols indexed by $\Sc^{\ast}$ are not sufficient to reconstruct all $k$ information symbols.
\end{defbold}

\begin{lemma}
\label{lem:m_bound}
Let $\mathbf{R}$ be a $k \times m$ matrix with entries in $\{0,1\}$, as defined in Sec.~\ref{sec:def}. 
Then, the number of columns in ${\bf R}$ satisfies the inequality $$m \geq \ceilb{\frac{kt}{r}}.$$
\end{lemma}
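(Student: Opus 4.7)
The plan is a direct double-counting argument on the number of 1-entries of $\mathbf{R}$, combining a lower bound obtained row by row with an upper bound obtained column by column.

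First I would lower-bound the row sums. Fix any $i\in[k]$. By Definition~\ref{rem:def11}, the sets $\Gamma_1(i),\ldots,\Gamma_t(i)$ are pairwise disjoint and (implicitly) non-empty, so the $t$ local groups $\Gamma_1(i)\cup\{i\},\ldots,\Gamma_t(i)\cup\{i\}$ are pairwise distinct and each one contains~$i$. Consequently the $i$-th row of $\mathbf{R}$ carries at least $t$ ones, and summing over all $i$ gives that the total number of 1-entries in $\mathbf{R}$ is at least $kt$.

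Next I would upper-bound the column sums. Each column of $\mathbf{R}$ corresponds to some local group $\Gamma_j(i)\cup\{i\}$; the number of 1's in that column equals the number of information symbols lying in this local group. Invoking the working assumption of Section~\ref{sec:bound} that every repair group $\Gamma_j(i)$ contains exactly one parity symbol, the set $\Gamma_j(i)$ holds at most $r-1$ information symbols, so $\Gamma_j(i)\cup\{i\}$ contains at most $r$ information symbols. Therefore each column of $\mathbf{R}$ has at most $r$ ones, and the total number of 1's in $\mathbf{R}$ is at most $rm$.

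Combining the two inequalities yields $kt\leq rm$, and since $m$ is a positive integer this gives $m\geq \lceil kt/r\rceil$, as claimed. The only subtle points are (i)~justifying distinctness of the $t$ local groups containing a given $i$, which reduces to the disjointness clause of Definition~\ref{rem:def11} together with the non-triviality of each $\Gamma_j(i)$, and (ii)~the column bound, which genuinely requires the single-parity-per-group hypothesis; without that hypothesis the same argument would only produce $m\geq \lceil kt/(r+1)\rceil$, because a local group could conceivably consist entirely of information symbols. Neither step involves heavy calculation, so I expect the proof to be short.
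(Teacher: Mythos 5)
Your proof is correct and is essentially identical to the paper's: both double-count the ones in $\mathbf{R}$, bounding the total below by $kt$ via the $t$ disjoint repair groups per information symbol and above by $mr$ via the column supports. Your explicit appeal to the single-parity-per-group hypothesis to justify that each local group contains at most $r$ information symbols is in fact slightly more careful than the paper, which asserts this column bound without spelling out that it rests on that assumption.
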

\begin{proof}
Note that, 
\begin{align}
\label{eq:bound_m1}
\text{$\#$ of $1$s in $\mathbf{R}$} &= \sum_{i = 1}^{m}\text{$\#$ of $1$s in $\mathbf{R}(:,i)$} \leq mr, \\
\text{$\#$ of $1$s in $\mathbf{R}$} &= \sum_{i = 1}^{k}\text{$\#$ of $1$s in $\mathbf{R}(i,:)$} \geq kt. \label{eq:bound_m2}
\end{align}
Here, $\mathbf{R}(:,i)$ and $\mathbf{R}(i,:)$ denote $i$-th column and $i$-th row of $\mathbf{R}$, respectively. 
The first inequality is because each of $m$ local groups, {\em i.e.}, a subset of the form $\Gamma_j(i)\cup \{i\}$, contain at most $r$ elements from $[k]$. The second is due to the fact that for each information symbol there are $t$ disjoint repair groups. Therefore, each $i \in [k]$ appears in at least $t$ local groups. Using \eqref{eq:bound_m1} and \eqref{eq:bound_m2}, we get that $mr \geq  kt$, or 
\begin{align}
&~~~m \geq \ceilb{\frac{kt}{r}}.
\end{align}
\end{proof}

\noindent  We use the above lemma to obtain the following theorem.
\begin{theorem}
\label{prop:bound_restricted}
Let $\cC$ be a linear $(n, k, r, t)$-LRC such that any repair group defined by $\mathbf{R}$ contains only $1$ parity symbol. 
Then, the distance of the code is bounded as
\begin{align}
\label{eq:dmin_bound}
d_{\min}(\Cc) \leq  n - k - \ceilb{\frac{kt}{r}} + t + 1.
\end{align}
\end{theorem}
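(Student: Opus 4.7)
The plan is to upper-bound $d_{\min}(\Cc)$ via Definition~\ref{def:dmin} by constructing a subset $S \subseteq [n]$ such that $\Cc|_S$ has rank at most $k-1$ and $|S|$ is as large as possible; this will yield $d_{\min}(\Cc) \leq n - |S|$.

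I would first establish a mild strengthening of Lemma~\ref{lem:m_bound}. Let $\tau_i$ denote the $i$-th row weight of $\Rm$, i.e.\ the number of local groups that contain information symbol~$i$, and set $\tau_{\min} = \min_{i \in [k]} \tau_i$. The $(r,t)$-availability assumption gives $\tau_{\min} \geq t$, and repeating the double-counting of Lemma~\ref{lem:m_bound} with this tighter row-weight lower bound yields $m \geq \lceil k\tau_{\min}/r \rceil$. I would also observe that under the ``one parity per local group'' hypothesis, distinct columns of $\Rm$ correspond to distinct parity coordinates: otherwise, subtracting the two local relations that share a parity symbol would produce a nontrivial linear dependency among the information symbols alone, contradicting systematic structure.

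Next, pick $i^{*} \in [k]$ with $\tau_{i^{*}} = \tau_{\min}$, and define
\[
  S = \bigl([k] \setminus \{i^{*}\}\bigr) \cup \bigl\{\text{parity coordinates whose expression in the information symbols does not involve } c_{i^{*}}\bigr\}.
\]
By construction every coordinate of $S$ is a linear function of $\{c_i : i \neq i^{*}\}$, so $\Cc|_S$ has rank exactly $k-1$. Among the $m$ local parities, at least $m - \tau_{i^{*}}$ lie in groups not containing $i^{*}$ and hence do not depend on $c_{i^{*}}$, so $|S| \geq (k-1) + (m - \tau_{i^{*}})$. Combining with $m \geq \lceil k\tau_{i^{*}}/r\rceil$ and using the identity $\lceil a \rceil - b = \lceil a - b \rceil$ for integer $b$,
\[
  |S| \geq (k-1) + \bigl\lceil \tau_{i^{*}}(k-r)/r \bigr\rceil \geq (k-1) + \lceil t(k-r)/r \rceil = k - 1 + \lceil kt/r \rceil - t,
\]
where the middle inequality uses $\tau_{i^{*}} \geq t$. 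Rearranging gives $d_{\min}(\Cc) \leq n - |S| \leq n - k - \lceil kt/r \rceil + t + 1$.

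The step I expect to require the most care is arguing that each local parity in a group containing $i^{*}$ truly involves $c_{i^{*}}$ (and hence must be excluded from $S$); this relies on the one-parity-per-group hypothesis together with the recoverability of $c_{i^{*}}$ from its own repair groups, which forces a nonzero coefficient on $c_{i^{*}}$ in the corresponding local relation. A conservative version that simply excludes every local parity in a group containing $i^{*}$ sidesteps this subtlety while still giving the stated bound on $|S|$, and thus on $d_{\min}(\Cc)$.
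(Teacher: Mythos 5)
Your proposal is correct and follows essentially the same route as the paper's proof: both erase one minimum-row-weight information symbol together with all local parities whose groups avoid it, lower-bound the number of surviving coordinates via the double-counting bound $m \geq \lceil k\tau_{\min}/r\rceil$, and invoke Definition~\ref{def:dmin}. The only cosmetic difference is that you absorb the cases $\tau_{\min}=t$ and $\tau_{\min}>t$ in one stroke via the monotonicity of $\lceil \tau(k-r)/r\rceil$ in $\tau$ (valid for $r\leq k$), whereas the paper treats them as two separate cases and compares the resulting bounds at the end.
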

\begin{proof}
Given the assumption that each repair group associated with $\mathbf{R}$ has $1$ parity, we have at least $m$ local parities in our code $\Cc$, one for each column of $\mathbf{R}$. 
Keeping Definition~\ref{def:dmin} in mind, we now construct a set $\Sc \subset [n]$ such that one can not reconstruct all $k$ information symbols from the encoded symbols indexed by $\Sc$.
 We consider two cases:

\textit{Case $1$:}~There is an information symbol, say $i$, which has exactly $t$ disjoint repair groups associated with it, {\em i.e.}, the $i$-th row of $\mathbf{R}$ has exactly $t$ ones. Consider the set $\Sc  = \left([k]\backslash{i}\right) \cup \Pc_{\mathbf{R}_{i}}$, where $\Pc_{\mathbf{R}_{i}}$ denotes the set of local parities associated with those columns of $\mathbf{R}$ that have zero as their $i$-th entry. By the choice of $i$, we have $|\Pc_{\mathbf{R}_{i}}| = m - t$. Note that we can not recover the $i$-th information symbol from the encoded symbols indexed by the set $\Sc$. This implies that $|\Sc^{\ast}| \geq |\Sc| = k - 1 + m - t$. Therefore, it follows from \eqref{eq:dmin_def} that
\begin{align}
\label{eq:dmin_1}
d_{\min}(\Cc) \leq n - |\Sc| = n - k - m + t + 1.
\end{align} 
Combining Lemma~\ref{lem:m_bound} and \eqref{eq:dmin_1}, we get
\begin{align}
\label{eq:dmin_bound_case1}
d_{\min}(\Cc) \leq  n - k - \ceilb{\frac{kt}{r}} + t + 1.
\end{align}

\textit{Case $2$:} The row with minimum number of $1$s in $\mathbf{R}$ has weight $t' > t$. In this case, we have $rt' \leq \text{$\#$ of $1$s in $\mathbf{R}$} \leq mr$, which gives us 
\begin{align}
\label{eq:new_m}
m \geq \ceilb{\frac{kt'}{r}}.
\end{align}

Let us assume that $i \in [k]$ is such that the $i$-th row in $\mathbf{R}$ has exactly $t'$ ones. 
Similar to case $1$, consider $\Sc  = \left([k]\backslash{i}\right) \cup \Pc_{\mathbf{R}_{i}}$. Note that the $i$-th information symbol can not be recovered from the encoded symbols indexed by $\Sc$ and $|\Sc^{\ast}| \geq |\Sc| = k - 1 + m - t'$. Using \eqref{eq:dmin_def} and \eqref{eq:new_m}, 
\begin{align}
\label{eq:dmin_bound_case2}
d_{\min}(\Cc) \leq  n - k - \ceilb{\frac{kt'}{r}} + t' + 1.
\end{align}

Note that for $r \leq k$, the right-hand side of \eqref{eq:dmin_bound_case1} is greater than that of \eqref{eq:dmin_bound_case2}. Therefore, we can combine the two cases to obtain the bound in \eqref{eq:dmin_bound}.
\end{proof}


\begin{rembold}
Note that for $t = 1$, the bound in \eqref{eq:dmin_bound}  reduces to the distance bound for codes with $r$-locality~\cite{Gopalan12, PapDim12}, {\em i.e.},
\begin{align}
d_{\min}(\Cc) \leq  n - k - \ceilb{\frac{k}{r}} + 2. \nonumber
\end{align}
\end{rembold}

Next, we present a more general bound on the minimum distance of an $(n, k, r, t)$-LRC. This bound does not assume linearity of a code or that the repair groups associated with $\mathbf{R}$ have exactly one (local) parity. 
\begin{theorem}
\label{prop:general_bound}
For an $(n, k, r, t)$-LRC, linear, or non-linear, we have  
\begin{align}
\label{eq:general_bound}
d_{\rm min}(\cC) \leq n - k - \ceilb{\frac{t(k-1) + 1}{t(r-1) + 1}} + 2
\end{align}

\end{theorem}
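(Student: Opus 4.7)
The plan is to exhibit a subset $S \subseteq [n]$ such that $\cv_S$ does not determine the full codeword $\cv$, and to show
\[
|S| \;\geq\; k + \ceilb{\tfrac{t(k-1)+1}{t(r-1)+1}} - 2;
\]
invoking Definition~\ref{def:dmin} then yields the stated upper bound on $d_{\min}(\Cc)$. The construction is iterative, in the spirit of the greedy argument of Gopalan et al.

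The key combinatorial fact is that for every information symbol $i$ the super-group
\[
L_i^{+} \;:=\; \{i\}\,\cup\,\bigcup_{j=1}^{t}\Gamma_j(i)
\]
has $|L_i^{+}|\leq 1 + tr$, yet carries at most $t(r-1)+1$ units of information. For linear codes this follows from counting linear relations among the columns of the generator matrix indexed by $L_i^{+}$: the $t$ disjoint repair groups each furnish one relation expressing the column of $i$ as a combination of columns of $\Gamma_j(i)$, yielding $t$ linearly independent relations, so $\mathrm{rank}(G_{L_i^{+}})\leq 1+tr-t = t(r-1)+1$. For non-linear codes the same bound holds on the entropy $H(\cv_{L_i^{+}})$ under a uniform message distribution, because each availability constraint $H(c_i \mid \cv_{\Gamma_j(i)})=0$ strips $\log|\Sigma|$ of entropy beyond the naive bound $|L_i^{+}|\log|\Sigma|$.

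I would then grow $S$ one super-group at a time. At each round, pick an information symbol $i$ that is not yet determined by $\cv_S$ and adjoin a suitable subset of $L_i^{+}$ to $S$. A full round contributes up to $1+tr$ positions to $|S|$ and at most $t(r-1)+1$ to the rank (respectively entropy), leaving a slack of $t$ extra positions per round over the corresponding Singleton-style accounting. Terminating just before all $k$ information symbols become determined---with possibly a final partial round chosen so that the rank lands exactly at $k-1$---produces a set $S$ of size at least $k-2 + \ceilb{\frac{t(k-1)+1}{t(r-1)+1}}$, which is exactly what is needed.

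The principal technical hurdle is the tight bookkeeping in the final, possibly partial, round: the ``$+1$'' in the numerator $t(k-1)+1$ and the offset $+2$ on the right-hand side of the bound both arise from carefully controlling how the last $L_i^{+}$ overshoots the threshold rank $k-1$ and how much of the $t$-position slack can still be harvested on that final round. A secondary subtlety in the non-linear case is formally translating $H(M\mid \cv_S)>0$ into the existence of two distinct codewords that agree on $S$, so that the inequality $d_{\min}(\Cc)\leq n-|S|$ carries over unchanged from the linear setting.
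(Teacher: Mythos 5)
Your core combinatorial fact is the same one the paper uses --- the super-group $L_i^{+}$ of $t$ disjoint repair groups occupies at most $1+tr$ coordinates but carries at most $t(r-1)+1$ degrees of freedom, a slack of $t$ per round --- but your framework (greedily growing a non-spanning set $S$ and invoking Definition~\ref{def:dmin}) is the primal, Gopalan-style view, whereas the paper uses the dual Forbes--Yekhanin route: iteratively \emph{fix} the coordinates of $\Sc_j=\bigcup_{l}\Gamma_l(i_j)$ to their most frequent value, obtaining a subcode $\cC'$ with $\log_q|\cC'|\geq k-\sum_j\bigl(a_j-(t-1)\bigr)$, then puncture on the fixed coordinates and apply the Singleton bound. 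That choice of framework is exactly what dissolves your self-identified ``principal technical hurdle.'' The paper never arranges for the rank to land at exactly $k-1$ and never harvests slack from a partial last round; it simply runs the greedy process for $\ell\geq\ceilb{\frac{k-1}{t(r-1)+1}}$ full rounds, gets the cleaner intermediate bound $d_{\min}(\cC)\leq n-k+1-t\ell$, and then recovers the stated form by the purely arithmetic inequality $t\ceilb{\frac{k-1}{tr-t+1}}\geq\ceilb{\frac{kt-t+1}{tr-t+1}}-1$. So the ``$+2$'' and the numerator $t(k-1)+1$ do not come from delicate last-round bookkeeping of $|S|$ at all; you are setting yourself a harder accounting problem than the theorem requires, and as written your proposal leaves that accounting unresolved.

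The second gap is in the iterative step itself. You prove $\mathrm{rank}(G_{L_i^{+}})\leq t(r-1)+1$ for an \emph{isolated} super-group, but the argument needs the conditional statement: when $L_i^{+}\setminus S$ is adjoined to an already-built $S$, the rank (or entropy) must still increase by at most $|L_i^{+}\setminus S|-t$. This fails if some repair group $\Gamma_j(i)$ is already contained in $S$ (that relation buys nothing new), so you need a rule for selecting the next symbol that guarantees all $t$ savings survive conditioning. The paper's fix is to pick $i_j$ as an information coordinate at which two surviving codewords of the current subcode still differ, and to observe that then none of its $t$ repair groups can be fully contained in the fixed set. Relatedly, your non-linear entropy claim that each constraint $H(c_i\mid\cv_{\Gamma_j(i)})=0$ ``strips $\log|\Sigma|$ of entropy'' requires $H(c_i)=\log|\Sigma|$ (uniform systematic symbols) and, again, a conditional version at every round; the paper implements this via a counting bound on the number of possible restrictions $\vec{x}|_{\Sc_j}$. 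Your observation that $H(M\mid\cv_S)>0$ yields two codewords agreeing on $S$ is fine, but without the conditional per-round accounting and the termination analysis the proof is not complete.
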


\begin{proof}
See Appendix~\ref{appen:general_bound} for the proof.
\end{proof}

\begin{rembold}
Wang {\em et al.} established \eqref{eq:general_bound} for linear codes and show the existence of linear codes that attain the bound when $n \geq k(rt + 1)$~\cite{WangZhang}. However, the tightness of \eqref{eq:general_bound} remains an open question in the general case and for codes with high rate.
\end{rembold}

\section{Achievability results for $(n, k, r, t)$-LRCs}
\label{sec:achievability_thms}
In this section, we present explicit constructions for codes with $(r, t)$-availability, and analyze their minimum distance. In particular, in Sec.~\ref{sec:local_information} we design $(n, k, r, t)$-LRCs  by modifying {\color{black}the} Pyramid code construction of \cite{pyramid}. {\color{black}Then, in} Sec.~\ref{sec:all_symbol} {\color{black}we use} Gabidulin codes in order to obtain $(n, k, r, t)$-LRCs with all-symbol locality. 
{\color{black}Finally,} Sec.~\ref{subsec:finding_S} describes the role of resolvable designs in {\color{black}our} proposed constructions. Note that the resolvable designs have been previously used to construct codes in other settings, {\em e.g.} see \cite{farzad}.

Before describing {\color{black}the} code constructions, we present in Sec.~\ref{sec:resolvable} and Sec.~\ref{subsec:gabidulin}  {\color{black}a} brief introduction to resolvable designs and Gabidulin codes. 

\subsection{Background on resolvable designs}
\label{sec:resolvable}
 
Here, we briefly introduce resolvable designs. Readers may refer to \cite{IonShri} for a detailed treatment of this subject. 

\begin{defbold}
\label{def:2-design}
A pair $(\Xc, \Bc)$, where $\Xc = (x_1, \ldots, x_k)$ is a $k$-element set and $\Bc = (B_1, \ldots, B_b)$ is a family of subsets (blocks) of $\Xc$, is called a $2$-$(k, b, c,  r, \lambda)$-resolvable design if it satisfies the following properties: (i) $|B_{i}| = r$~for all~$i \in [b]$, (ii) every pair $(x, y) \subset \Xc$ is present in exactly $\lambda$ blocks (subsets) in $\Bc$, and (iii) $\Bc$ comprises $c$ disjoint collections of blocks (namely parallel classes) $\Ec_1,\ldots, \Ec_c \subset \Bc$ such that $|\{B \in \Ec_{i} : x_{j} \in B\}| = 1$, for all $i \in [c]$ and $j \in [v]$, {\em i.e.}, blocks in each parallel class partition the set $\Xc$. 
\end{defbold}

{\color{black}We use a} $k \times b$ matrix $\Id_{(\Xc, \Bc)}$ with $0$s and $1$s {\color{black}to denote} the incidence matrix of a design $(\Xc, \Bc)$, {\color{black}where} $\Id_{(\Xc, \Bc)}(i,j) = 1$ {\color{black}if only if}  $x_i \in B_j$.  

In this paper, we focus on $2$-resolvable designs with $\lambda = 1$. Note that $\lambda = 1$ enforces that $x \in \Xc$ is the only common element in any $2$ blocks containing $x$. We now present an example of $2$-($k, b, c,  r, \lambda = 1$)-resolvable design which is obtained as a solution to Kirkman's schoolgirl problem\footnote{http://en.wikipedia.org/wiki/Kirkman's\underline{ }schoolgirl\underline{ }problem}.

\begin{example}
Let $\Xc = \{1, 2, \ldots, 15\}$. Fig.~\ref{fig:kirkman} represents a $2$-$(15, 35, 7, 3, 1)$-resolvable design over elements of the set $\Xc$. 
The {\color{black}$b=35$ different $3$-element sets} in Fig.~\ref{fig:kirkman} denote {\color{black}the} blocks in the design. Note that any $2$ elements appear together in exactly $\lambda = 1$ block. The blocks from each of the $7$ parallel classes $\Ec_1,\ldots, \Ec_7$ partition the set $\Xc$.
\end{example}
\begin{figure}[t!]
\begin{center}
\includegraphics[width=0.34\textwidth]{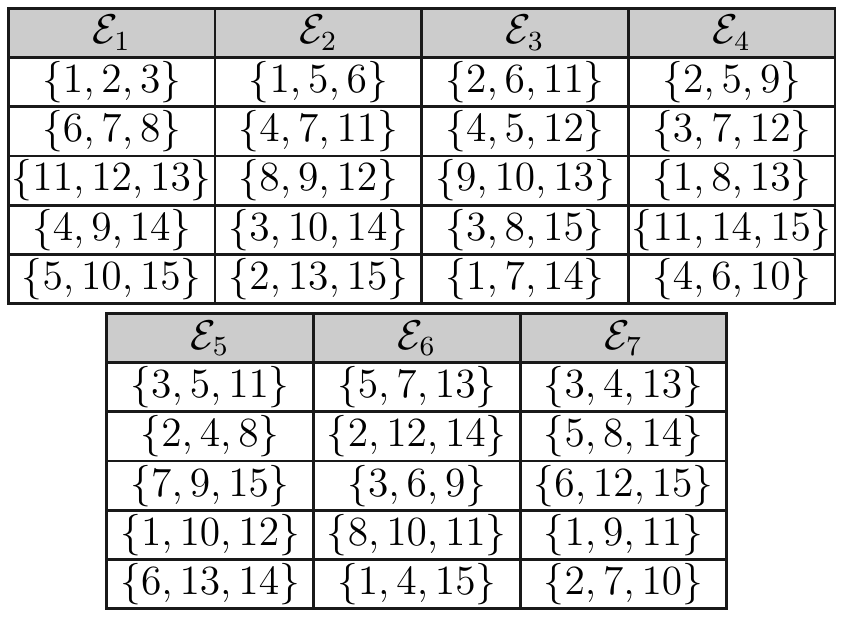}
\end{center}
\caption{An example of $2$-$(15, 35, 7, 3, 1)$ resolvable design.}
\label{fig:kirkman}
\end{figure}

\subsection{Gabidulin codes}
\label{subsec:gabidulin}
Gabidulin codes are an example of maximum rank distance (MRD) codes~\cite{Gab85}.
{\color{black}Gabidulin codes are MDS}. Encoding $\mathbf{m} = (m_1, m_2,\ldots, m_{\mathcal{K}}) \in \F_{q^M}^{\Kc}$ to a codeword $\cv$ of  an $[\mathcal{N}, \mathcal{K}, \mathcal{D}]$ Gabidulin code over $\mathbb{F}_{q^M}$ is performed by evaluating {\color{black}a} data polynomial $\mathpzc{f}(y) = \sum_{i = 1}^{\mathcal{K}}m_{i}y^{q^{i-1}}$ at $\Nc$-linearly independent (over $\mathbb{F}_{q}$) points in $\mathbb{F}_{q^M}$, say $\{y_1, y_2,\ldots, y_{\Nc}\}~\subset \mathbb{F}_{q^M}$, {\em i.e.}, $\mathbf{c}= (\mathpzc{f}(y_1),\ldots, \mathpzc{f}(y_{\Nc}))$. Note that the above encoding process can be represented as $\cv = \mv\Gm_{\rm Gab}$, where
\begin{align}
\Gm_{\rm Gab} = [\Gm^{1}_{\rm Gab}~|~ \Gm^{2}_{\rm Gab}] =  \left(\begin{array}{cccc}
y_{1} & y_{2} &\cdots & y_{\mathcal{N}} \\
y^q_{1} & y^q_{2} & \cdots & y^q_{\mathcal{N}} \\
\vdots & \vdots & \ddots & \vdots \\
y^{q^{\mathcal{K}-1}}_1 & y^{q^{\mathcal{K}-1}}_2 & \cdots &y^{q^{\mathcal{K}-1}}_{\mathcal{N}} \end{array} \right). \n 
\end{align}
Here, $\Gm^{1}_{\rm Gab}$ and $\Gm^{2}_{\rm Gab}$ denote the first $\mathcal{K}$ and the last $\mathcal{N} - \mathcal{K}$ columns of $\Gm_{\rm Gab}$, respectively.


\subsection{Construction of $(n, k, r, t)$-LRCs}
\label{sec:local_information}

In this subsection we present a construction for $(n, k, r , t)$-LRCs when $r|k$ and the following assumption holds. 
\begin{assumption}
\label{assump:1}
There is a $k \times t\frac{k}{r}$ matrix $\Rm$ with $0$s and $1$s such that (i) each column of $\Rm$ has $r$ nonzero entries with supports of the columns of $\Rm$ giving $t$ partitions of $[k]$, and (ii) {\color{black}the} supports of any two rows of $\Rm$ intersect at at most $1$ position.
\end{assumption}
Let $\mathbf{R}_1,\ldots, \mathbf{R}_t$ denote the $t$ collection of $\frac{k}{r}$ columns of $\mathbf{R}$ whose supports constitute $t$ distinct partitions of the set $[k]$.  
\begin{figure}[htbp]
	\centering
		\includegraphics[width=0.48\textwidth]{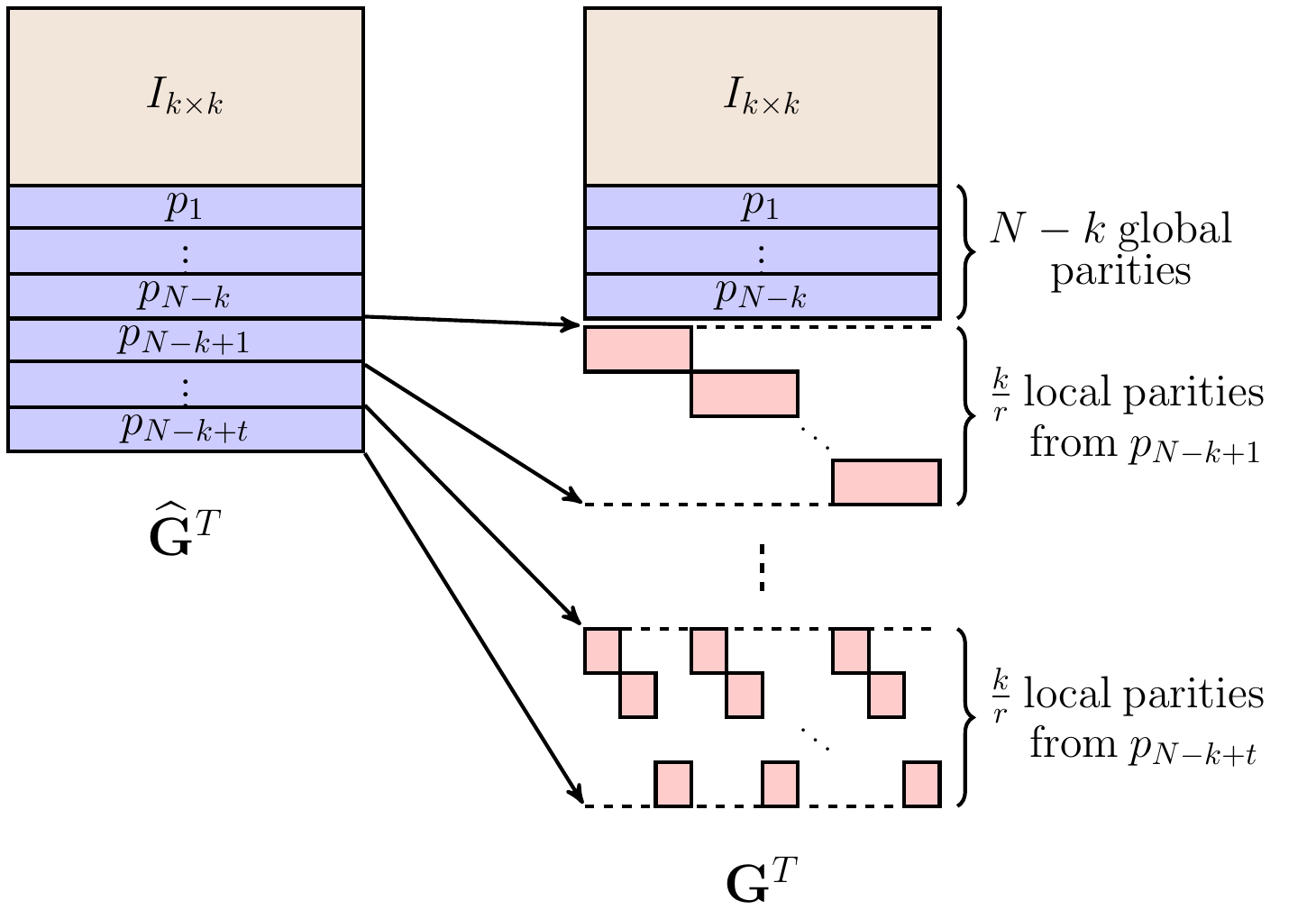}
           \caption{Illustration of the construction of a generator matrix $\Gm$ for an $\left(n = N + t\frac{k}{r}, k, r, t\right)$-LRC from generator matrix $\widehat{\Gm}$ of a systematic $(N+t, k)$ MDS code. First $N$ columns of $\Gm$ are exactly the same as first $N$ columns of $\widehat{\Gm}$. Last $t\frac{k}{r}$ columns of $\Gm$ are obtained by splitting each of last $t$ columns of $\widehat{\Gm}$ into $\frac{k}{r}$ columns of support $r$ each.}
	\label{fig:Pyramid_availability}
\end{figure}
Given a generator matrix $\widehat{\Gm}$ of a systematic $(N + t, k)$ MDS code, a generator matrix $\Gm$ for an $(n = N + \frac{kt}{r}, k, r, t)$-LRC is constructed as follows:

\begin{figure*}
        \centering
        \begin{subfigure}[b]{0.48\textwidth}
                \centering
                \includegraphics[width=0.97\textwidth]{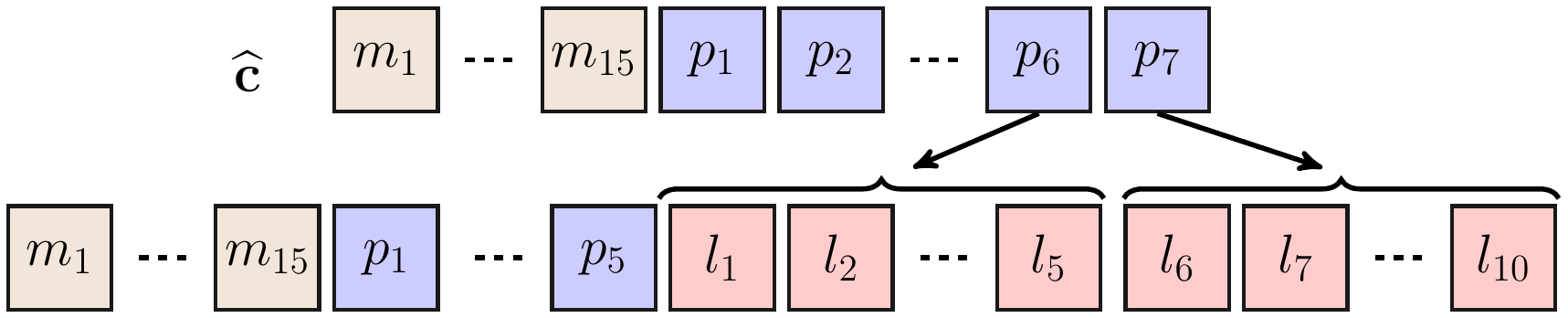}
                \caption{A $(30, 15, 3, 2)$-LRC}
                \label{fig:example1}
        \end{subfigure}%
       ~
        \begin{subfigure}[b]{0.48\textwidth}
   		\footnotesize
    		\centering
   		 \begin{tabular}{| l | l |}
   		 \hline
    		$l_1$ = $a_1m_1$ + $a_2m_2$ + $a_3m_3$ & $l_6$ = $b_1m_1$ + $b_5m_5$ + $b_6m_6$ \\ \hline
    		$l_2$ = $a_6m_6$ + $a_7m_7$  + $a_8m_8$ & $l_7$ = $b_4m_4$ + $b_7m_7$ + $b_{11}m_{11}$ \\ \hline
    		$l_3$ = $a_{11}m_{11}$ + $a_{12}m_{12}$ + $a_{13}b_{13}$ & $l_8$ = $b_{8}m_{8}$ + $b_{9}m_{9}$ + $b_{12}m_{12}$ \\ \hline
    		$l_4$ = $a_4m_4 $+ $a_9m_9$ +  $a_{14}m_{14} $ & $l_9$ = $b_{3}m_{3}$ + $b_{10}m_{10}$ + $b_{14}m_{14}$ \\ \hline
    		$l_5$ = $a_5m_5$ + $a_{10}m_{10}$ + $a_{15}m_{15}$ & $l_{10}$ = $b_{2}m_{2}$ + $b_{13}m_{13}$ + $b_{15}m_{15}$  \\
    		\hline 
   		\end{tabular} 
		\caption{Local parities in Example~\ref{example1}.}
		\label{tab:example1}
        \end{subfigure}
        \caption{A $(30, 15, 3, 2)$-LRC $\Cc$ obtained by Construction I, as described in Example~\ref{example1}. 
Fig.~\ref{tab:example1} illustrates the local parities  $(l_1,\ldots, l_{10})$ of $\Cc$ which are obtained by breaking two parities $p_6 = \sum_{i = 1}^{15}a_im_i$ and $p_7 = \sum_{i = 1}^{15}b_im_i$ of $\widehat{\Cc}$.}
\end{figure*}

\subsubsection{Construction I}
\label{subsec:construction_I}
\begin{itemize}
\item {\color{black}The} first $N$ columns of ${\Gm}$  {\color{black}are the} first $N$ columns of  $\widehat\Gm$. (See Fig.~\ref{fig:Pyramid_availability}).
\item For each $i \in [t]$, split   {\color{black}the} $(N+i)$-th column of $\widehat{\Gm}$ into $\frac{k}{r}$ columns of weight $r$ each,   {\color{black}such that} their supports   {\color{black}are according} to the $\frac{k}{r}$ columns in $\Rm_i$.
\end{itemize}
Note that   {\color{black}the} first $k$ columns of $\Gm$ correspond to systematic symbols.
  {\color{black}Then, the} columns   of $\Gm$ {\color{black}from} $k+1$ to $N$  are associated with global parities. 
  {\color{black}Finally,} the last $t\frac{k}{r}$ columns of $\Gm$, obtained by splitting {\color{black}the} $t$ last columns of $\widehat{\Gm}$, correspond to local parities.  

{\color{black}Note that Construction I} differs from the original Pyramid codes~\cite{pyramid}, {\color{black}as} the {\color{black}non-overlapping support} requirement on  the local parities is not present in \cite{pyramid}. 
 
\begin{rembold}
\label{rem:design_use}
Resolvable designs provide one way to obtain the matrix $\Rm$ utilized in Construction I. 
Given a $2$-$(k, b, c,  r, 1)$-resolvable design $(\Xc, \Bc)$ with $c \geq t$ parallel classes, one can take $\Rm$ to be the sub-matrix of the incidence matrix $\Id_{(\Xc, \Bc)}$ associated with first $t$ parallel classes. Note that the columns of $\Id_{(\Xc, \Bc)}$ correspond to blocks in $\Bc$, and {\color{black}the} support of {\color{black}the} $j$-th column indicates the elements of $\Xc$ that appear in block $B_j \in \Bc$. (See Sec.~\ref{sec:resolvable}.) For $i \in [t]$, $\Rm_i$ denotes {\color{black}the} $\frac{k}{r}$ columns of $\Rm$ associated with {\color{black}the} $i$-th parallel class in $(\Xc, \Bc)$. 
\end{rembold}

See Sec.~\ref{subsec:finding_S} for further discussion on finding the matrix $\Rm$. Next, we present an example to describe Construction I. 

\begin{example}
\label{example1}
Here, we present an $(30, 15, 3, 2)$-LRC which is obtained by Construction I. 
As described in Construction I, we take a systematic $(N+t, k) = (22, 15)$ MDS code with generator matrix $\widehat{\Gm}$. Let $\mv = (m_1, \ldots, m_{15})$ and $\widehat{\cv} = (m_1,\ldots, m_{15}, p_{1},\ldots, p_{7}) = \mv\widehat{\Gm}$. (See Fig.~\ref{fig:example1}.) The codeword $\cv$ for message $\mv$ in {\color{black}our} $(30, 15, 3, 2)$-LRC is obtained by splitting {\color{black}the} $t = 2$ columns in $\widehat{\Gm}$ which corresponds to parities $p_{6}$ and $p_{7}$ in $\widehat{\cv}$. Let $(l_1,\ldots, l_{10})$ denote the $t\frac{k}{r} = 2 \times \frac{15}{3} = 10$ local parities obtained in this manner. 

 Assuming that $p_6 = \sum_{i = 1}^{15}a_im_i$ and $p_7 = \sum_{i = 1}^{15}b_im_i$, Fig.~\ref{tab:example1} describes the local parities $\{l_j\}_{j = 1}^{10}$. We use {\color{black}the} first $2$ parallel classes of the design from  Fig.~\ref{fig:kirkman} to generate {\color{black}the} local parities (see Remark~\ref{rem:design_use}). In particular, each parallel class of the design in Fig.~\ref{fig:kirkman} gives a {\color{black}partition} of $\{m_1,\ldots, m_{15}\}$ into $\frac{k}{r} = 5$ sets of size $r = 3$. Each of these $5$ sets corresponds to $1$ local parity as evident from Fig.~\ref{tab:example1}. Note that $\Cc$ has $(3, 2)$-availability. For example, $m_1$ can be reconstructed by $\{m_2, m_3, l_1\}$ and $\{m_5, m_6, l_6\}$. Similarly, $\{m_5, m_{10}, l_5\}$ and $\{m_{2}, m_{13}, l_{10}\}$ allow us to recover $m_{15}$.
\end{example}

The following result establishes that Construction I generates $(n, k, r, t)$-LRCs which attain the distance bound in \eqref{eq:dmin_bound}. 
%
\begin{theorem}
\label{thm:Pyramid}
Let $r | k$ and {\color{black}let} Assumption~\ref{assump:1} hold. Then, Construction I gives an $(n = N + t\frac{k}{r}, k, r, t)$-LRC with 
\begin{align}
\label{eq:Pyramid_dmin_proof}
d_{\min}(\Cc)  = n - k - \ceilb{\frac{kt}{r}} + t + 1.
\end{align}
\end{theorem}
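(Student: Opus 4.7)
The plan is to establish the upper and lower bounds separately.

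For the upper bound, I would simply invoke Theorem~\ref{prop:bound_restricted}. Its hypothesis (every local group contains exactly one parity symbol) is automatic for Construction I, since each local group has the form $B \cup \{l_B\}$, with $B$ a set of $r$ information symbols (the support of a column of some $\Rm_i$) and $l_B$ the lone parity obtained by splitting the $(N+i)$-th column of $\widehat{\Gm}$ along $B$. This immediately gives $d_{\min}(\Cc) \le n - k - \ceilb{kt/r} + t + 1$.

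The matching lower bound is the interesting direction, and my plan is to reduce it to the minimum distance of the underlying MDS code $\widehat{\Cc}$ via a ``collapse'' embedding. The key observation is that Assumption~\ref{assump:1}(i) (the $k/r$ local parities in each class have supports that partition $[k]$) forces
\[
p_i \;\;\eqdef\;\; \sum_{B \in \Ec_i} l_B \;=\; \mv \cdot \widehat{\gv}_i,
\]
where $\widehat{\gv}_i$ is the $(N+i)$-th column of $\widehat{\Gm}$. Given a nonzero codeword $\cv = \mv\Gm \in \Cc$, I would append these virtual symbols to form $\cv' = (c_1, \ldots, c_N, p_1, \ldots, p_t) \in \F^{N+t}$; then $\cv' = \mv \widehat{\Gm}$ is a nonzero codeword of the $(N+t, k)$ MDS code $\widehat{\Cc}$, and hence has Hamming weight at least $N+t-k+1$.

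To finish, I would compare weights. The weights of $\cv$ and $\cv'$ coincide on the first $N$ coordinates. On the last $t$ coordinates of $\cv'$, whenever $p_i \ne 0$ at least one local parity $l_B$ with $B \in \Ec_i$ must be nonzero in $\cv$ (otherwise the defining sum vanishes), so the number of nonzero local parities of $\cv$ in class $i$ dominates $\mathbf{1}\{p_i \ne 0\}$. Summing over $i$ gives
\[
|\cv| \;\ge\; |\cv'| \;\ge\; N+t-k+1 \;=\; n - k - \ceilb{kt/r} + t + 1,
\]
using $r \mid k$. This matches the upper bound.

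I do not anticipate a real obstacle in this argument; the only design choice worth highlighting is using the collapse embedding $\cv \mapsto \cv'$ instead of attempting a direct erasure-correction or pattern-counting argument on the generator matrix, which would be considerably more delicate. It is also worth noting that Assumption~\ref{assump:1}(ii) plays no role in the distance proof itself; it is needed only upstream, to ensure that Construction~I yields a code with genuine $(r,t)$-availability (i.e.\ that the repair groups of any information symbol are disjoint across classes).
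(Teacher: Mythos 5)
Your proof is correct, and its lower-bound half takes a genuinely different route from the paper's. The paper likewise gets the upper bound from Theorem~\ref{prop:bound_restricted}, but for the converse it argues via erasure correction: it shows that any pattern of $N-k+t$ erasures is correctable, splitting into cases according to how many erasures hit the systematic/global-parity coordinates versus the local parities, and in the second case it reconstructs the missing global parities of $\widehat{\Cc}$ by summing the $k/r$ local parities of each fully intact parallel class. Your collapse embedding $\cv \mapsto \cv'$ rests on exactly the same algebraic identity (the partition property in Assumption~\ref{assump:1}(i) forces $\sum_{B \in \Ec_i} l_B = \mv\cdot\widehat{\gv}_i$), but packages it as a direct minimum-weight bound, $\mathrm{wt}(\cv)\ge \mathrm{wt}(\cv')\ge N+t-k+1$, which with $r\mid k$ equals $n-k-kt/r+t+1$. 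This avoids the worst-case erasure-pattern bookkeeping entirely and is arguably cleaner; the paper's version has the mild advantage of exhibiting an explicit decoding procedure. Both arguments use linearity (yours through the equivalence of minimum distance and minimum weight), and your remark that Assumption~\ref{assump:1}(ii) enters only upstream --- to guarantee disjoint repair groups, hence that the code is a bona fide $(n,k,r,t)$-LRC to which the upper bound of Theorem~\ref{prop:bound_restricted} applies --- is consistent with how the paper uses it.
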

\begin{proof}
We use columns of $\Rm$ to construct $t\frac{k}{r}$ local parities from $t$ parities of an MDS code in Construction I. The requirement on columns of $\Rm$ to produce $t$ partitions of $[k]$ ensures that each systematic symbol is covered by $t$ local parities of weight $r$. The restriction on the size of intersection of support of any two rows of $\Rm$ translates to the fact that any two of the $t$ repair groups for a systematic symbol consists of disjoint code symbols. 

In order to establish \eqref{eq:Pyramid_dmin_proof}, we show that an $(n, k)$ code obtained from construction I can correct any pattern of $n - k - \ceilb{\frac{kt}{r}} + t =  N + t\frac{k}{r} - k  - \frac{kt}{r} + t = N - k + t$ node erasures. The proof here essentially follows the arguments presented in \cite{pyramid}. 

We index symbols of a codeword in $\mathcal{C}$ from $1$ to $n$. Let $\mathcal{I}, \mathcal{P}^{\rm gbl}, \mathcal{P}^1$ denote the  sets of indices of systematic symbols, global parities, and local parities introduced in Construction I (for $(r, t)$-availability), respectively.  Let $\{\mathcal{P}_i^1\}_{i = 1}^{t}$ denote the sets of indices of local parities obtained from $(N + i)$-th column  of $\widehat{\Gm}$. Note that $|\mathcal{P}^1_i| =\frac{k}{r}$~for all $i \in [t]$ and $\mathcal{P}^1 = \bigcup_{i = 1}^{t}\mathcal{P}_{i}^1$. 

Next, we consider two cases for node erasure patterns:

\textit{Case 1:} There are at most $ N  - k$ erasures among the symbols indexed by the set $\mathcal{L} = \mathcal{I}\cup\mathcal{P}^{\rm gbl}$. Note that the code obtained from puncturing $\mathcal{C}$ on $[n]\backslash \mathcal{L}$, {\em i.e.}, $\mathcal{C}_{\mathcal{L}}$, is an $(N, k)$ MDS code. Therefore, message symbols $\mathbf{m}$ can be recovered from $\mathcal{C}_{\mathcal{L}}$ even after any pattern of at most $ N  - k$ erasures in the symbols indexed by the set $\mathcal{L}$.

\textit{Case 2:} There are $ N - k + x$~($0 < x  \leq t$) erasures in $\Cc_{\Lm}$ and $t - x$ erasures among the symbols indexed by the set $\mathcal{P}^1$. In this case, we obtain $k - x$ symbols of a codeword in an $(N + t, k)$ MDS code with generator matrix $\widehat{\Gm}$ from unerased symbols of $\Cc_{\mathcal{L}}$.

Note that there are $t  - x$ erasures in $\Cc_{\mathcal{P}^1}$. In the worst case, these erasure are spread in $t - x$ sets in $\{\mathcal{P}_{1}^1, \ldots, \mathcal{P}_{t}^{1}\}$. Let $\{\mathcal{P}_{i_1}^{1}, \ldots, \mathcal{P}_{i_{x}}^{1}\}$ be the sets corresponding to local parities that do not have any erasures. We can combine $\frac{k}{r}$ local parities associated with each of these sets to obtain $x$ global parities of  the $(N + t, k)$ MDS code with generator matrix $\widehat{\Gm}$. Combining these with the symbols obtained from $\Cc_{\Lc}$, we have $k$ symbols of a codeword in the MDS code with generator matrix $\widehat{\Gm}$. This allows us to recover ${\mathbf{m}}$. 

This completes the proof.
\end{proof}

\subsection{Construction of $(n, k, r, t)$-LRCs with all-symbol locality}
\label{sec:all_symbol}


We now utilize Gabidulin codes~\cite{Gab85} to  construct $(n, k, r, t)$-LRCs with all-symbol locality when $r|k$, $r|N$, and Assumption~\ref{assump:1} holds. Gabidulin codes have been previously used to obtain codes with all-symbol locality in \cite{SRV12, RKSV12}.

\subsubsection{Construction II}
\label{subsec:construction_II}
Let $\Gm_{\rm Gab}$ be a generator matrix of an $[\Nc$ = $N + t -1, \Kc$ = $k]$ Gabidulin code. 
We transform $\Gm_{\rm Gab}$ into $\bar{\Gm}_{\rm Gab}$ = $\left(\Gm^{1}_{\rm Gab}\right)^{-1}\Gm_{\rm Gab}$ = $[\Id~|~\left(\Gm^{1}_{\rm Gab}\right)^{-1}\Gm^{2}_{\rm Gab}]$, which is a generator matrix of a systematic $(N + t - 1, k)$ MDS code. See Sec.~\ref{subsec:gabidulin} for the definitions of $\Gm^{1}_{\rm Gab}$ and $\Gm^{2}_{\rm Gab}$. 

Given the matrix $\bar{\Gm}_{\rm Gab}$, {\color{black}we construct} an $(n, k, r, t)$-LRC with all-symbol locality as follows:

\begin{figure}[htbp]
	\centering
		\includegraphics[width=0.38\textwidth]{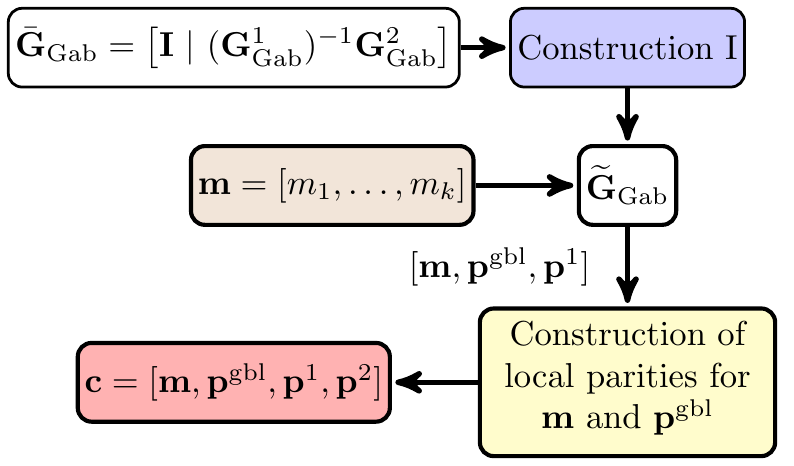}
          \caption{Description of Construction~II. $\mathbf{p}^{\rm gbl}$ and $\mathbf{p}^{1}$ denote global and local parities added in step $1$. $\pv^{2}$ represent the local parities added in step $2$ for all-symbol locality.}
	\label{fig:all_symbol_code1}
\end{figure}

\textbf{Step $1$:} Construct an $(\widetilde{n}$ = $N + (t  - 1)\frac{k}{r}, k, r, t-1)$-LRC from $\bar{\Gm}_{\rm Gab}$ using Construction I with $\Rm_1,\ldots, \Rm_{t-1}$. Let $\widetilde{\Gm}_{\rm Gab}$ denote the generator matrix of the obtained code (see Fig.~\ref{fig:all_symbol_code1}.)

\textbf{Step $2$:} Partition the systematic symbols and global parities of the codeword obtained in the previous step into $\frac{k}{r}  + \frac{N-k}{r} = \frac{N}{r}$ groups, {\color{black}each} of size  $r$. Then introduce $\frac{N}{r}$ local parities for each of these groups ($\frac{k}{r}$ for systematic symbols and $\frac{N-k}{r}$ for global parities). The coefficients of these local parities are chosen to be nonzero elements from the base field $\F_q$. In this step, we partition {\color{black}the} systematic symbols according to the supports of the columns in $\Rm_t$. 
 In this way, we obtain an $(n = N + \frac{N}{r} + (t  - 1)\frac{k}{r}, k, r, t)$-LRC with all-symbol locality. 
\begin{rembold}
The requirement of $r|N$ can be relaxed by following the ideas from \cite{RKSV12}. For the ease of exposition we only consider the case $r \mid N$ here.
\end{rembold}

\begin{theorem}
\label{thm:all_symbol}
Let $r | k$, $r | N$, and {\color{black}let} Assumption~\ref{assump:1} hold. Then, Construction II gives an $(n = N + \frac{N}{r} + (t  - 1)\frac{k}{r}, k, r, t)$-LRC with all-symbol locality and 
\begin{align}
\label{eq:all_symbol_dmin_proof}
d_{\min}(\mathcal{C}) = n - k - \ceilb{\frac{kt}{r}} + t + 1.
\end{align}
\end{theorem}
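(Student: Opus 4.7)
The plan is to establish the locality/availability structure of $\Cc$, apply Theorem~\ref{prop:bound_restricted} for the distance upper bound, and prove matching achievability by showing $\Cc$ corrects the maximum number of erasures permitted by that bound.

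For the locality and availability claim: the $t-1$ partitions $\Rm_1,\dots,\Rm_{t-1}$ used in Step~1 give each information symbol $t-1$ disjoint repair groups of size $r$, while Step~2 adds a $t$-th disjoint group using $\Rm_t$ together with the new local parity on its support. Disjointness across all $t$ groups follows from Assumption~\ref{assump:1}, which forces any two rows of $\Rm$ to share at most one column. For all-symbol locality: every global parity of the Step~1 code lies in a Step~2 group of size $r$ together with its fresh local parity; every Step~2 parity is by construction a function of the $r$ coordinates it covers; and every Step~1 local parity is an $\mathbb{F}_{q^M}$-combination of the $r$ information symbols indexed by a column of some $\Rm_j$. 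Because every repair group defined by the resulting membership matrix contains exactly one parity symbol, Theorem~\ref{prop:bound_restricted} applies and yields $d_{\min}(\Cc)\le n-k-\lceil kt/r\rceil + t + 1$.

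For the matching lower bound I will show that any erasure pattern $E$ with $|E|\le \mu := n-k-\lceil kt/r\rceil + t = (N-k)(1+1/r) + t$ is correctable. The key reduction is to decoding in the underlying $(N+t-1,k)$ systematic MDS code $\bar{\Cc}$ with generator matrix $\bar{\Gm}_{\rm Gab}$, which corrects any $N+t-1-k$ erasures. Two observations drive the reduction: (i) each Step~2 parity together with the $r$ symbols it covers forms a single-parity check, so a single erasure in such a group is locally correctable; and (ii) if the $k/r$ Step~1 parities in class $\mathcal{P}^1_j$ are all unerased, their sum recovers $\mathpzc{f}(y_{N+j})$, i.e., the $(N+j)$-th coordinate of $\bar{\Cc}$. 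Since the first $N$ coordinates of $\bar{\Cc}$ are exactly the systematic and global-parity symbols of $\Cc$, after using~(i) on every Step~2 group with a single internal erasure and using~(ii) to reconstruct as many of the last $t-1$ coordinates of $\bar{\Cc}$ as possible, it suffices to show that we retain at least $k$ coordinates of $\bar{\Cc}$.

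The main obstacle is this final counting step: for every distribution of the $\mu$ erasures across the buckets $\Ic,\mathcal{P}^{\rm gbl},\mathcal{P}^1,\mathcal{P}^2$, the number of $\bar{\Cc}$-coordinates lost must be at most $N+t-1-k$, the erasure-correcting radius of $\bar{\Cc}$. Intuitively each Step~2 group absorbs up to one erasure ``for free,'' while each Step~1 class either survives intact (providing one of the last $\bar{\Cc}$-coordinates) or blocks one last coordinate at a cost of between $1$ and $k/r$ units of the erasure budget. A balancing argument across the four buckets, mirroring the two-case split used in the proof of Theorem~\ref{thm:Pyramid}, forces any excess erasures in one bucket to be offset by savings in another, yielding the bound. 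The Gabidulin/rank-metric structure is essential here because it guarantees that \emph{any} $k$ of the $N+t-1$ MDS coordinates of $\bar{\Cc}$ suffice to recover $\mathbf{m}$.
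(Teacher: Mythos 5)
Your outline of the locality/availability part and the use of the bound \eqref{eq:dmin_bound} for the converse direction matches the paper. The achievability argument, however, takes a different route from the paper's, and as stated it does not close: the reduction to erasure decoding of the $(N+t-1,k)$ MDS code $\bar{\Cc}$ (local single-parity repair in Step-2 groups, recombination of intact $\mathcal{P}^1_j$ classes, then ``retain at least $k$ coordinates of $\bar{\Cc}$'') is lossy, and the final counting step you defer to a ``balancing argument'' actually fails. Concretely, let the adversary erase $\mu = N+\frac{N}{r}-k-\frac{k}{r}+t$ symbols, all inside $\Ic\cup\Pc^{\rm gbl}$ and placed at least two per Step-2 group. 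Then no Step-2 group is locally correctable, so only $N-\mu = k-\frac{N-k}{r}-t$ of the first $N$ coordinates of $\bar{\Cc}$ survive; adding the $t-1$ recombined coordinates from the intact $\mathcal{P}^1$ classes gives $k-\frac{N-k}{r}-1 < k$ coordinates of $\bar{\Cc}$, so MDS decoding of $\bar{\Cc}$ alone cannot succeed. Yet this pattern is within the claimed radius and \emph{is} correctable --- but only because the $\frac{N}{r}$ unerased Step-2 parities carry information in their own right, which your reduction discards by treating them purely as repair tools.

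This is exactly where the rank-metric structure enters, and for a stronger reason than the one you cite (``any $k$ MDS coordinates suffice'' is just the MDS property and would hold for Reed--Solomon as well). Each Step-2 parity is an $\F_q$-linear combination of evaluations of the linearized polynomial $\mathpzc{f}$, hence is itself an evaluation of $\mathpzc{f}$ at a point in the $\F_q$-span of the evaluation points; the MRD property guarantees that \emph{any} $k$ surviving symbols whose associated points are $\F_q$-linearly independent determine $\mathbf{m}$. The paper exploits this by splitting on the number of erasures in $\Lc=\Ic\cup\Pc^{\rm gbl}\cup\Pc^2$: the punctured code $\Cc_{\Lc}$ is itself a $d_{\min}$-optimal $(N+\frac{N}{r},k)$ all-symbol-locality code correcting $N+\frac{N}{r}-k-\frac{k}{r}+1$ erasures (strictly more than the $N-k$ your reduction would extract from the first $N$ coordinates), and each surplus erasure beyond that radius is compensated by one intact $\mathcal{P}^1_j$ class supplying a fresh linearly independent evaluation. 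To repair your proof you would need to replace the reduction to $\bar{\Cc}$ with this evaluation-counting argument over $\F_q$-linearly independent points.
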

\begin{proof}
The proof of all-symbol locality and $(r, t)$-availability for $\mathcal{C}$ follows from the structure of $\Rm$ used in the Construction II. Here, we show that $\mathcal{C}$ allows original message symbols to be recovered even after any pattern of  $n - k - \ceilb{\frac{kt}{r}} + t  = N + \frac{N}{r} + (t  - 1)\frac{k}{r} -  k - \ceilb{\frac{kt}{r}} + t = N + \frac{N}{r} - k - \frac{k}{r} + t$ node erasures. This along with \eqref{eq:dmin_bound} give us the result in \eqref{eq:all_symbol_dmin_proof}.

We index symbols of a codeword in $\mathcal{C}$ from $1$ to $n$. Let $\mathcal{I}, \mathcal{P}^{\rm gbl}, \mathcal{P}^1, \mathcal{P}^2$ denote the sets of indices of systematic symbols, global parities, local parities introduced in step $1$ of Construction II, and local parities introduced in step $2$ of Construction II (to enable all-symbol locality), respectively. Step $1$ of Construction II involves splitting $(t-1)$ columns of the generator matrix $\bar{\Gm}_{\rm Gab}$. Let $\{\mathcal{P}^1_i\}_{i = 1}^{t-1}$ denote the sets of indices of local parities obtained from $(N + i)$-th column  of $\bar{\Gm}_{\rm Gab}$. Note that $|\mathcal{P}^1_i| =\frac{k}{r}$~for all $i \in [t-1]$ and $\mathcal{P}^1 = \bigcup_{i = 1}^{t-1}\mathcal{P}^1_{i}$. 

Next, we consider two cases for node erasure patterns:

\textit{Case 1:} There are at most $ N + \frac{N}{r} - k - \frac{k}{r} + 1$ erasures among the symbols indexed by the set $\Lc = \mathcal{I}\cup\mathcal{P}^{\rm gbl}\cup \mathcal{P}^2$. Note that the code obtained from puncturing $\mathcal{C}$ on $[n] \backslash \mathcal{L}$, {\em i.e.}, $\mathcal{C}_{\mathcal{L}}$, is a $d_{\min}$-optimal $(N + \frac{N}{r}, k)$ code with all-symbol locality $r$ \cite{SRV12, RKSV12}. These codes have minimum distance $N + \frac{N}{r} - k - \frac{k}{r} + 2$. Therefore, we can recover $k$ message symbols $\mathbf{m}$ from $\mathcal{C}_{\mathcal{L}}$ even after at most $ N + \frac{N}{r} - k - \frac{k}{r} + 1$ erasures in $\Cc_{\mathcal{L}}$.

\textit{Case 2:} There are $ N + \frac{N}{r} - k - \frac{k}{r} + 1 + x$~($1 \leq x \leq t - 1$)~erasures in $\Cc_{\mathcal{L}}$ and $t -1 - x$ erasures among the symbols indexed by the set $\mathcal{P}^1$. In this case, we obtain evaluation of a linearized polynomial $\mathpzc{f}(\cdot)$, which has $\widetilde{\mathbf{m}} =  \mathbf{m}\left(\Gm^{1}_{\rm Gab}\right)^{-1}$ as its coefficients, at $k - x$ linearly independent (over $\mathbb{F}_q$) points in $\mathbb{F}_{q^{M}}$. 

In this case, there are $t - 1 - x$ erasures among the symbols indexed by the set $\mathcal{P}^1$. In worst case, these erasure are spread in $t - 1 - x$ sets in $\{\mathcal{P}_{1}^1, \ldots, \mathcal{P}_{t-1}^{1}\}$. Let $\{\mathcal{P}^1_{i_1}, \ldots, \mathcal{P}^1_{i_{x}}\}$ be the sets corresponding to local parities that do not have any node erasure. We can combine $\frac{k}{r}$ local parities in each of these sets to obtain $x$ global parities, which correspond to evaluations of the linearized polynomial $\mathpzc{f}(\cdot)$ at $x$ linearly independent (over $\mathbb{F}_q$) points in $\mathbb{F}_{q^{M}}$. Note that these $x$ points are linearly independent from $k-x$ linearly independent (over $\F_q$) points associated with unerased symbols in $\Cc_{\mathcal{L}}$. Therefore, we get evaluations of $\mathpzc{f}(\cdot)$ at $k$ linearly independent (over $\mathbb{F}_q$) points in $\mathbb{F}_{q^{M}}$, which allows us to recover $\widetilde{\mathbf{m}}$. Given  $\widetilde{\mathbf{m}}$, we obtain $\mathbf{m}$ as  $\mathbf{m} = \widetilde{\mathbf{m}}{\Gm}^{1}_{\rm Gab}$.

This completes the proof. 
\end{proof}

\subsection{{\color{black}Explicit designs of} $\mathbf{R}$ for Constructions I and II}
\label{subsec:finding_S}

Construction I and II utilize a $k \times t\frac{k}{r}$ matrix $\Rm$ with $0$s and $1$s which satisfies specific requirements on the structure of its column and row supports (see Assumption~\ref{assump:1}). The columns of the matrix $\Rm$ are used to generate local parities for an $(n, k, r, t)$-LRC in Construction I. Similarly, for $(n, k, r, t)$-LRCs with all-symbol locality, the local parities of {\color{black}the} systematic symbols are designed according to the columns of $\Rm$.

As highlighted in Remark~\ref{rem:design_use}, a $2$-$(k, b, c,  r, 1)$ resolvable design $(\Xc, \Bc)$ with $c \geq t$ parallel classes allows us to obtain the matrix $\Rm$. 
Here, we discuss the application of a family of $2$-$(k$ = $q^{3}+1, b$ = $q^2(q^2 - q + 1), c$ = $q^2,  r$ = $q + 1, 1)$-resolvable designs \cite[Theorem 5.3.9]{IonShri}, for a prime power $q$,  to generate $(n, k, r, t)$-LRCs using Construction I and II. By scaling $q$, one can construct $(n, k, r, t)$-LRCs with $r = \Theta(k^{1/3})$, $t = r^{1-\epsilon}$ and distance $n - k - ko(1) + o(k) + 1$, {\em i.e.}, $(n, k, r, t)$-LRCs with orderwise {\color{black}the} same {\color{black}distance} of an $(n, k)$-MDS code.

By moving away from resolvable designs, we show that another construction for the matrix $\Rm$ follows from the work by Tamo et al.~\cite{zigzag13} on codes with optimal rebuilding ratio. The matrix $\Rm$ obtained from \cite{zigzag13} allows us to obtain $(n, k, r, t)$-LRCs with $r = \Theta(\frac{\log(k)}{\log\log(k)})$, $t = \Theta(r^{1-\epsilon})$ and orderwise {\color{black}the} same {\color{black}distance} as that of an $(n, k)$ MDS code.

\subsubsection{Codes with locality $\Theta(k^{\frac{1}{3}})$}

Let $q$ be a prime power. For such $q$ there exists a resolvable $2$-$(k = q^{3}+1, b = q^2(q^2 - q + 1), c = q^2,  r = q + 1, \lambda = 1)$ design \cite[Theorem 5.3.9]{IonShri}. We refer reader to \cite{IonShri} for the description of these resolvable designs. Given a resolvable design from this family, we can construct an $(n, k, r, t)$-LRC with  $k = q^{3} + 1$, $r = q + 1 = \Theta(k^{\frac{1}{3}})$ and $t = \Theta(r^{1-\epsilon})$. For such a code $\cC$ we have, 
\begin{align}
d_{\min}(\cC) &= n - k - \ceilb{\frac{kt}{r}} + t + 1 \nonumber \\
& = n - k - ko(1) + o(k) + 1. \nonumber
\end{align}
Here, we have used the fact that $\frac{t}{r} = o(1)$ and $t = o(r) = o(k)$. For an $(n,k)$ MDS code $\cC^{\rm MDS}$ we have,
$$
d_{\min}(\cC^{\rm MDS}) = n - k + 1.
$$
Therefore, 
\begin{align}
\frac{d_{\min}(\cC)}{d_{\min}(\cC^{\rm MDS})}& = \frac{n - k - \ceilb{\frac{kt}{r}} + t + 1}{n - k + 1} \nonumber \\
& = \frac{n - k - ko(1) + o(k) + 1}{n - k + 1}, \nonumber
\end{align}
which tends to $1$ as we scale both $n$ and $k$ for a fixed rate, i.e, fixed $\frac{k}{n}$. Here, we note that $n \geq k  + \frac{kt}{r} = k(1 + o(1))$. Thus, rate arbitrarily close to $1$ is possible for suitable choices of the parameters $r$ (or $q$) and $t$.
\subsubsection{Codes with locality $\Theta(\frac{\log(k)}{\log\log(k)})$}

In this subsection, we describe an approach to construct the matrix $\Rm$ with locality parameter $r = \Theta(\frac{\log(k)}{\log\log(k)})$. This particular construction follows from the work of Tamo et al. on MDS array codes with optimal rebuilding~\cite{zigzag13}. Here, we explain how the generation process for parity symbols in zigzag codes~\cite{zigzag13} implicitly constructs the matrix $\Rm$ with $k, r,$ and $t$ such that $k = rt^r$. Note that one can choose $r = \Theta(\frac{\log(k)}{\log\log(k)})$, $t = \Theta(r^{1-\epsilon})$ to satisfy $k = rt^r$.
Given a set of $k = rt^r$ elements $\Xc = \{1, 2, \ldots, k= rt^{r}\}$, we construct a collection of $r$-sized blocks $\Bc$ from elements of $\Xc$. Each block in $\Bc$ is essentially associated with a parity symbol in a zigzag code~\cite{zigzag13}. The matrix $\Rm$ is then chosen as the $k \times t^{r+1}$ indicator matrix of the collection $(\Xc, \Bc)$ (note that $\frac{k}{r}t = t^r\cdot t = t^{r+1}$). 

First, partition the set $\Xc$ into $r$ subsets $\{\Xc_j\}_{j =1}^{r}$ such that $\Xc_j = \{(j-1)t^r+1,\ldots, jt^r\}$. We index $t^r$ elements of each set $\Xc_j$ from $0$ to $t^r-1$ and denote $i$-th elements of $\Xc_j$ by $x_{i,j}$. Now consider $r$ vectors $\{e_1,\ldots, e_{r}\}$ in $\ZZ_t^r$ such that $e_j$ has $1$ at $j$-th position and zeros at other $r-1$ positions. Next, generate the collection of blocks $\Bc = \{B_j\}_{j = [t^{r+1}]} = \{Z^{l}_s\}_{l \in [1:t], s \in [0:t^r-1]}$ as follows. For each $l \in [1:t]$ and $s \in [0:t^r-1]$, take 
\begin{align}
\label{eq:Z_block}
Z^{l}_s := \{x_{i,j}: i + (l-1)e_j = s~({\rm mod}~r)\}.
\end{align}
In the definition of $Z^{l}_s$, we use $i$ to denote a unique $r$-dimensional vector associated with $i$ in $\ZZ_t^r$. Given $(\Xc, \Bc)$, we construct $k\times t^{r+1}$ $0/1$ matrix $\Rm$ as follows. 
\begin{align}
\Rm(i,j) = \begin{cases} 1, & \mbox{if } x_i \in B_{j} \\ 0, & \mbox{otherwise}. \end{cases} \nonumber
\end{align}
Note that we require the matrix $\Rm$ to satisfy two requirements (see Assumption~\ref{assump:1}). The first requirements follows from the construction of $\Bc$ as, for each $l \in [t]$,~$t^r$ blocks $\{Z^{l}_s\}_{s \in [0:t^r-1]}$ partition all the elements in $\Xc$.  Moreover, each block, say $Z^l_s$, contains exactly $r$ elements (see \eqref{eq:Z_block}). It remains to show that the second requirement for $\Rm$ also holds. This follows from the next claim that any pair of elements of $\Xc$ appears in at most one block in $\Bc$. 

\begin{claim}
Each pair of elements $\{x_{i,j}, x_{i',j'}\} \in \Xc$ is present in at most one block in $\Bc$.
\end{claim}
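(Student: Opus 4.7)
My plan is to show that the index $(l,s)$ of a block in $\Bc$ containing both elements of the pair is uniquely determined, so at most one such block can exist. The starting observation is that $x_{i,j} \in Z^l_s$ if and only if $i + (l-1)e_j \equiv s$ componentwise in $\ZZ_t^r$; I read the modular reduction in \eqref{eq:Z_block} as ``mod $t$'' coordinatewise, which is the only reading consistent with $i, s \in \ZZ_t^r$. Hence if both $x_{i,j}$ and $x_{i',j'}$ lie in the same $Z^l_s$, subtracting the two defining congruences gives
\[
(l-1)(e_{j'} - e_j) \equiv i - i' \pmod{t},
\]
and $s$ is then determined by either one of the original equations.

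I would then split into two cases according to whether $j = j'$. When $j = j'$, the left-hand side above is zero, forcing $i \equiv i' \pmod{t}$ and hence $x_{i,j} = x_{i',j'}$; so a genuine pair of distinct elements with $j = j'$ cannot share a block. When $j \neq j'$, the vector $e_{j'} - e_j$ has entry $-1$ at coordinate $j$, entry $+1$ at coordinate $j'$, and is zero elsewhere. Reading off coordinate $j$ of the displayed congruence yields $l - 1 \equiv i'_j - i_j \pmod{t}$, which pins down $l - 1 \in \{0, 1, \ldots, t-1\}$ and hence $l \in [1:t]$ uniquely. The value of $s$ is then forced, so the pair lies in at most one block.

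The whole argument reduces to a single coordinate of a linear congruence modulo $t$, so I do not expect a substantive obstacle. The only point requiring care is notational: one must consistently identify the scalar index $i \in [0:t^r-1]$ with its $\ZZ_t^r$-vector encoding, and interpret \eqref{eq:Z_block} coordinatewise mod $t$ rather than mod $r$ as literally printed. Once that convention is fixed, the case split above completes the proof; no counting or combinatorial design-theoretic input is required beyond the definition of $Z^l_s$ itself.
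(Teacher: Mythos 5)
Your proof is correct and takes essentially the same route as the paper's: both arguments subtract the defining congruences of membership in $Z^l_s$ and exploit the structure of the basis vectors $e_j$, the only difference being that you eliminate $s$ within a single block and solve for $l$ directly, whereas the paper assumes two containing blocks and eliminates $i,i'$ to reach the contradiction $j=j'$. Your reading of the reduction in \eqref{eq:Z_block} as componentwise modulo $t$ (rather than the printed ``mod $r$'') is the intended one, since $i$ and $s$ are identified with vectors in $\ZZ_t^r$.
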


\begin{proof}
Let us assume that the opposite is true and there exists a pair of distinct elements $\{x_{i,j}, x_{i',j'}\} \in \Xc$ which appear in two blocks $Z^{l}_s$ and $Z^{l'}_{s'}$. It is easy to check from the construction of sets $ \{Z^{l}_s\}_{l \in [1:t], s \in [0:t^r-1]}$ that $j \neq j'$ as only one element from a set $\Xc_j$ participate in a block. Similarly, we have $l \neq l'$ as for a fixed $l \in [t]$, the blocks in $\{Z^{l}_s\}_{s \in [0:t^r-1]}$ partition the set $\Xc$; as a result, no element can appear in two blocks in $\{Z^{l}_s\}_{s \in [0:t^r-1]}$.

Since $\{x_{i,j}, x_{i',j'}\} \in Z^{l}_{s}$ and $\{x_{i,j}, x_{i',j'}\} \in Z^{l'}_{s'}$, we have,
\begin{align}
\label{eq:zigzag1}
i + (l -1)e_j = s~({\rm mod}~r), \\
i + (l' -1)e_j = s'~({\rm mod}~r) \label{eq:zigzag2}
\end{align}
and 
\begin{align}
\label{eq:zigzag3}
i' + (l -1)e_{j'} = s~({\rm mod}~r), \\ 
i' + (l' -1)e_{j'} = s'~({\rm mod}~r).  \label{eq:zigzag4}
\end{align}

Note that we use vector representation of $i, i', s,$ and $s'$ in the above equations. Subtracting \eqref{eq:zigzag2} from \eqref{eq:zigzag1}, and \eqref{eq:zigzag4} from \eqref{eq:zigzag3}, we obtain
\begin{align}
\label{eq:zigzag5}
 (l - l')e_{j} = s - s'~({\rm mod}~r),\\ 
 (l -l')e_{j'} = s -s'~({\rm mod}~r).  \label{eq:zigzag6}
\end{align}
However, it is not possible to satisfy both \eqref{eq:zigzag5} and \eqref{eq:zigzag6} simultaneously as that leads to a contradiction $j = j'$. This completes the proof.
\end{proof}
Here, we note that $(\Xc, \Bc)$ obtained in this subsection is not a $2$-design as it does not satisfy the requirement (ii) in Definition~\ref{def:2-design}, {\em i.e.}, every pair $(x, y) \subset \Xc$ is present in {\em exactly} $1$ block (subset) in $\Bc$.  .


\section{Conclusions and Open Problems}

There are several important questions that remain open. It is unclear if explicit codes can be constructed to attain the general bound in Theorem~\ref{prop:general_bound}. Further, it is very interesting to explore if the trade-off between distance, locality and availability in Theorem~\ref{prop:bound_restricted} remains true for non-linear codes. 

Some combinatorial questions also arise. It seems that resolvable design requirements are not entirely necessary but it is not clear if significantly better parameters can be obtained by other methods. In addition, there are several choices of parameters $(n,k,r,t)$ where it is not clear if codes with high availability exist.

Another open question is that of multiple parallel reads of different symbols, as explored in \cite{batchcodes}. Our current work ensures that each symbol can be read by multiple processes but no guarantees are given for reading two different symbols concurrently.  It would be interesting to obtain bounds on the number of arbitrary parallel reads that can be supported by any code of a given locality and distance. 

From a practical {\color{black}point of view}, we believe that the property of multiple parallel reads will be useful for distributed storage systems with hot data. The benefits need to be quantified, possibly through queuing theoretic models or through system measurements. Regardless of current technological impact, we believe that the concept of availability is interesting and gives a fruitful direction for coding theory research.


%
\bibliographystyle{unsrt}
\bibliography{availability_ISIT}

\begin{thebibliography}{10}

\bibitem{dimakis}
A.~Dimakis, P.~Godfrey, Y.~Wu, M.~Wainwright, and K.~Ramchandran.
\newblock Network coding for distributed storage systems.
\newblock {\em IEEE Trans. Inf. Theory}, 56(9):4539--4551, 2010.

\bibitem{RSK11}
K.~Rashmi, N.~Shah, and P.~Kumar.
\newblock Optimal exact-regenerating codes for distributed storage at the {MSR}
  and {MBR} points via a product-matrix construction.
\newblock {\em IEEE Trans. Inf. Theory}, 57:5227--5239, 2011.

\bibitem{zigzag13}
I.~Tamo, Z.~Wang, and J.~Bruck.
\newblock Zigzag codes: Mds array codes with optimal rebuilding.
\newblock {\em IEEE Trans. Inf. Theory}, 59(3):1597--1616, 2013.

\bibitem{cadambe2011optimal}
V.~R. Cadambe, C.~Huang, and J.~Li.
\newblock Permutation code: Optimal exact-repair of a single failed node in mds
  code based distributed storage systems.
\newblock In {\em Proc. of IEEE ISIT}, 2011.

\bibitem{PapDimCad_hadamard}
D.~Papailiopoulos, A.~Dimakis, and V.~Cadambe.
\newblock Repair optimal erasure codes through hadamard designs.
\newblock {\em IEEE Trans. Inf. Theory}, 59(5):3021--3037, 2013.

\bibitem{rashmi2013solution}
K.~Rashmi, N~Shah, D.~Gu, H.~Kuang, D.~Borthakurand, and K.~Ramchandran.
\newblock A solution to the network challenges of data recovery in
  erasure-coded distributed storage systems: A study on the facebook warehouse
  cluster.
\newblock In {\em Proc. of USENIX HotStorage}, 2013.

\bibitem{khan2011search}
O.~Khan, R.~Burns, J.~Park, and C.~Huang.
\newblock In search of {I/O}-optimal recovery from disk failures.
\newblock In {\em Proc. of USENIX HotStorage}, 2011.

\bibitem{HLM2007}
J.~Han and L.A. Lastras-Montano.
\newblock Reliable memories with subline accesses.
\newblock In {\em Proc. of IEEE ISIT}, 2007.

\bibitem{pyramid}
C.~Huang, M.~Chen, and J.~Li.
\newblock Pyramid codes: Flexible schemes to trade space for access efficiency
  in reliable data storage systems.
\newblock In {\em NCA}, 2007.

\bibitem{Gopalan12}
P.~Gopalan, C.~Huang, H.~Simitci, and S.~Yekhanin.
\newblock On the locality of codeword symbols.
\newblock {\em IEEE Trans. Inf. Theory}, 58(11):6925--6934, 2012.

\bibitem{SimpleRegen}
D.~Papailiopoulos, J.~Luo, A.~Dimakis, C.~Huang, and J.~Li.
\newblock Simple regenerating codes: Network coding for cloud storage.
\newblock In {\em Proc. of INFOCOM}, 2012.

\bibitem{PapDim12}
D.~Papailiopoulos and A.~Dimakis.
\newblock Locally repairable codes.
\newblock In {\em Proc. of IEEE ISIT}, 2012.

\bibitem{PKLK12}
N.~Prakash, G.~Kamath, V.~Lalitha, and P.~Kumar.
\newblock Optimal linear codes with a local-error-correction property.
\newblock In {\em Proc. of IEEE ISIT}, 2012.

\bibitem{SRV12}
N.~Silberstein, A.~Rawat, and S.~Vishwanath.
\newblock Error resilience in distributed storage via rank-metric codes.
\newblock In {\em Proc. of Allerton}, 2012.

\bibitem{RKSV12}
A.~S. Rawat, O.~O. Koyluoglu, N.~Silberstein, and S.~Vishwanath.
\newblock Optimal locally repairable and secure codes for distributed storage
  systems.
\newblock {\em IEEE Trans. Inf. Theory}, 60(1):212--236, 2014.

\bibitem{KPLK12}
G.~Kamath, N.~Prakash, V.~Lalitha, and P.~Kumar.
\newblock Codes with local regeneration.
\newblock {\em CoRR}, abs/1211.1932, 2012.

\bibitem{TamoPapDim13}
I.~Tamo, D.~Papailiopoulos, and A.~Dimakis.
\newblock Optimal locally repairable codes and connections to matroid theory.
\newblock In {\em Proc. of IEEE ISIT}, 2013.

\bibitem{forbes}
M.~Forbes and S.~Yekhanin.
\newblock On the locality of codeword symbols in non-linear codes.
\newblock {\em CoRR}, abs/1303.3921, 2013.

\bibitem{cadambe_mazumdar}
V.~R. Cadambe and A.~Mazumdar.
\newblock An upper bound on the size of locally recoverable codes.
\newblock {\em CoRR}, abs/1308.3200, abs/1308.3200, 2013.

\bibitem{azure12}
C.~Huang, H.~Simitci, Y.~Xu, A.~Ogus, B.~Calder, P.~Gopalan, J.~Li, and
  S.~Yekhanin.
\newblock Erasure coding in windows azure storage.
\newblock USENIX Association, 2012.

\bibitem{XORing}
M.~Sathiamoorthy, M.~Asteris, D.~Papailiopoulos, A.~G. Dimakis, R.~Vadali,
  S.~Chen, and D.~Borthakur.
\newblock Xoring elephants: novel erasure codes for big data.
\newblock In {\em Proc. of VLDB}, 2013.

\bibitem{shah2013redundant}
N.~B. Shah, K.~Lee, and K.~Ramchandran.
\newblock When do redundant requests reduce latency?
\newblock In {\em Proc. of 51st Allerton}, 2013.

\bibitem{joshi2013delay}
G.~Joshi, Y.~Liu, and E.~Soljanin.
\newblock On the delay-storage trade-off in content download from coded
  distributed storage systems.
\newblock {\em CoRR}, abs/1305.3945, 2013.

\bibitem{liang2013tofec}
G.~Liang and U.~Kozat.
\newblock Tofec: Achieving optimal throughput-delay trade-off of cloud storage
  using erasure codes.
\newblock {\em CoRR}, abs/1307.8083, 2013.

\bibitem{LlHolOgg2013}
L.~Pamies-Juarez, H.~Hollmann, and F.~Oggier.
\newblock Locally repairable codes with multiple repair alternatives.
\newblock In {\em Proc. of IEEE ISIT}, 2013.

\bibitem{asteris}
M.~Asteris and A.~G. Dimakis.
\newblock Repairable fountain codes.
\newblock {\em CoRR}, abs/1401.0734, 2014.

\bibitem{TamoBerg}
I.~Tamo and A.~Berg.
\newblock A family of optimal locally recoverable codes.
\newblock {\em CoRR}, abs/1311.3284, 2013.

\bibitem{WangZhang}
A.~Wang and Z.~Zhang.
\newblock Repair locality with multiple erasure tolerance.
\newblock {\em CoRR}, abs/1306.4774, 2013.

\bibitem{batchcodes}
Y.~Ishai, E.~Kushilevitz, R.~Ostrovsky, and A.~Sahai.
\newblock Batch codes and their applications.
\newblock In {\em Proc. of STOC}, 2004.

\bibitem{farzad}
F.~Farnoud and O.~Milenkovic.
\newblock Multipermutation codes in the {U}lam metric for nonvolatile memories.
\newblock {\em CoRR}, abs/1312.2163, 2013.

\bibitem{IonShri}
Y.~J. Ionin and M.~S. Shrikhande.
\newblock {\em Combinatorics of Symmetric Designs}.
\newblock Cambridge University Press (United Kingdom), 2006.

\bibitem{Gab85}
E.~M. Gabidulin.
\newblock Theory of codes with maximum rank distance.
\newblock {\em Problems of Information Transmission}, 21:1 -- 12, July 1985.

\end{thebibliography}

\appendices

\section{Proof of Theorem~\ref{prop:general_bound}}
\label{appen:general_bound}
We utilize the proof technique introduced by Forbes and Yekhanin in \cite{forbes} to obtain upper bound on minimum distance of a non-linear systematic code with locality $r$.  Note that Cadambe and Mazumdar also introduce the similar proof technique to obtain field size dependent upper bound on the minimum distance in \cite{cadambe_mazumdar}. However, we closely follow the approach of Forbes and Yekhanin~\cite{forbes} in the following.

\begin{figure}[htbp]
\algrule[1pt]
\textbf{Algorithm:} Construction of sub-code $\cC' \subset \cC$.
\algrule[1pt]
\begin{algorithmic}[1]
\REQUIRE $n, k, r, t$ and $(n, k)$ systematic code $\cC \subseteq \Sigma^{n}$ with $(r, t)$-availability ($\Sigma = \F_q$).
\STATE $\cC_0 = \cC$
\STATE $j = 0$  
\WHILE{$|\cC_{j}| > q$}
\STATE $j = j + 1$.
\STATE Choose $i_j$ such that $i_j \notin \Rc_{j - 1} := \bigcup_{j' \in [j-1]}\left(\Sc_{j'} \cup \{i_{j'}\}\right)$.
\STATE Let $\Sc_{j} = \Sc^{1}_{i_j} \cup \cdots \cup \Sc^{t}_{i_j}$ be union of $t$ disjoint local groups of $i_j$.
\STATE Let $\vec{\sigma}_{\,j} \in \Sigma^{|\Sc_{j}|}$ be the most frequent element in the multi-set $\{\vec{x}|_{\Sc_j}: \vec{x} \in \cC_{j-1}\}$.
\STATE Define $\cC_{j} := \{\vec{x}: \vec{x} \in \cC_{j-1}, \vec{x}|_{\Sc_j} = \vec{\sigma}_{\,j}\}$.
\IF{$1 < |\cC_{j}| \leq  q$}
\STATE $\cC' = \cC_{j}$.
\STATE \textbf{end while}
\ELSIF{$|\cC_{j}| = 1$}
\STATE Pick a maximal subset $\widetilde{\Sc}_j \subseteq \Sc_j$ such that $|\tilde{\cC_j}| > 1$, where $\widetilde{\cC}_{j} := \{\vec{x}: \vec{x} \in \cC_{j-1}, \vec{x}|_{\widetilde{\Sc}_j} = \vec{\gamma}_{\,j}\}$ and $\vec{\gamma}_{\,j} \in \Sigma^{|\tilde{\Sc}_{j}|}$ be the most frequent element in the multi-set $\{\vec{x}|_{\tilde{\Sc}_j}: \vec{x} \in \cC_{j-1}\}$.
\STATE $\cC' = \widetilde{\cC}_j$. 
\STATE \textbf{end while}.
\ENDIF
\ENDWHILE
\ENSURE $\cC'$.
\end{algorithmic}
\algrule[1pt]
\caption{Construction of sub-code $\cC' \subset \cC$.}
\label{algo:subcode}
\end{figure}

\begin{proof}
Given a systematic $(n, k, r, t)$-LRC, we construct a subcode $\cC' \subset \cC$ such that most of the coordinates of codewords in $\cC$ are fixed. Now, puncturing the codewords of $\cC'$ on these fixed coordinates provides us with a new code $\cC''$ which has the same dimension as $\cC'$ but at the same time has codewords of significantly smaller length as compare to $n$. Note that $d_{\min}(\cC'') = d_{\min}(\cC') \geq d_{\min}(\cC)$. We then apply the Singleton bound on $\cC''$ to obtain an upper bound on $d_{\min}(\cC'')$, which subsequently gives us an upper bound on $d_{\min}(\cC)$. An algorithm to construct the subcode $\cC' \subset \cC$ is presented in Fig.~\ref{algo:subcode}.  

Note that the algorithm in Fig.~\ref{algo:subcode} is well defined in the sense that it is always possible to find $i_j$ at line $5$. Since the algorithm reaches at line $5$ only if $\cC_{j-1} > 1$, there exists two distinct codewords say $\cv_1$ and $\cv_2$ in $\cC_{j-1}$. Note that both $\cv_1$ and $\cv_2$ are identical at coordinates specified by $\Rc_{j-1}$ as coordinates at $\{\Sc_{j'}\}_{j' < j}$ and consequently coordinates at $\{i_{j'}\}_{j' < j}$ have been fixed in step $8$ of previous iterations. Therefore, they have to differ at a coordinate which is not in $\Rc_{j-1}$. Moreover, none of its disjoint $t$ repair groups $\Sc^1_{i_j},\ldots, \Sc^{t}_{i_j}$ are completely contained in $\Rc_{j-1}$; otherwise, the $i_j$-th coordinate would have been fixed as coordinate $i_j$ is fixed once all coordinates in any one of its repair groups are fixed. 

Before we proceed with analysis, we define $\Ac_{j} = \Sc_{j} \backslash \Rc_{j - 1}$ and $a_j = |\Ac_{j}|$. Assuming that the while loop in Fig.~\ref{algo:subcode} ends with $j = \ell$, for $j \in [\ell]$, we have 
$$
\Rc_{j} = \bigsqcup_{j' \in [j]}\left(\Ac_{j'} \sqcup \{i_{j'}\}\right).
$$
At line $7$, taking into account locality due to $t$ disjoint repair groups, there are at most $q^{a_j - (t-1)}$ possibilities for $\vec{\sigma}_{\,j}$; thus, we have $|\cC_j| \geq |\cC_{j-1}|/q^{a_j - (t-1)}$. Note that there are two possibilities for last iteration $j = \ell$. The sub-code $\cC'$ can be obtained at line $10$ or at line $14$. In the following, for the ease of exposition, we assume that the $\cC'$ is obtained at line $10$ of $\ell$-th iteration. Other case can be analyzed using ideas similar to those employed in \cite{PapDim12, RKSV12, WangZhang}.

Since the construction algorithm for $\cC'$ ends with $j = \ell$, we have $|\cC_{\ell}| \leq  q$, or
\begin{align}
1 \geq \log_{q}|\cC_{\ell+1}|&\geq k - \sum_{j = 1}^{\ell}\left(a_j - (t-1)\right).
\end{align}
Now, using that $a_j \leq |\Sc_j| \leq tr$, we get 
\begin{align}
k - 1 \leq \ell(tr - t + 1), 
\end{align}
which gives us that 
\begin{align}
\label{eq:no_iter}
\ell \geq \ceilb{\frac{k- 1}{tr - t + 1}}.
\end{align}
Note that sub-code $\cC' = \cC_{\ell}$. Therefore, 
\begin{align}
\label{eq:new_dim}
\log_q|\cC'| & = \log_q|\cC_{\ell}| \nonumber \\
&\geq \log_q|\cC| - \sum_{j = 1}^{\ell}\left(a_j - (t-1)\right) \nonumber \\
& = k - \sum_{j = 1}^{\ell}a_j + \ell(t-1) \nonumber \\
&\overset{(a)}{=} k - |\Rc_{\ell}| + \ell + \ell(t-1) \nonumber \\
& = k - |\Rc_{\ell}| + t\ell,
\end{align}
where $(a)$ follows from the fact that $|\Rc_{\ell}| = |\bigsqcup_{j' \in [\ell]}\left(\Ac_{j'} \sqcup \{i_{j'}\}\right)| = \sum_{i = 1}^{\ell}a_j + \ell$. Now, we define $\cC'' = \cC'|_{\Rc_{\ell}}$ which denotes the sub-code obtained by puncturing $\cC'$ on indices denoted by $\Rc_{\ell}$. Since all codewords in $\cC'$ are fixed for all coordinates indexed by $\Rc_{\ell}$, we have $|\cC''| = |\cC'|$ and $d_{\min}(\cC'') = d_{\min}(\cC')$. Moreover, the length of the codewords in $\cC''$ is $n - |\Rc_{\ell}|$. Next, applying the Singleton bound on $\cC''$ gives us
\begin{align}
\label{eq:dmin_mid}
d_{\min}(\cC'') &\leq (n - |\Rc_{\ell}|) - \log_q|\cC''| + 1 \nonumber \\
&\overset{(b)}{\leq} n - |\Rc_{\ell}| - (k - |\Rc_{\ell}| +t\ell) + 1\nonumber \\
& = n - k - t\ell + 1,
\end{align}
where $(b)$ follows from \eqref{eq:new_dim} and the fact that $|\cC''| = |\cC'|$. Now, combining \eqref{eq:dmin_mid} and \eqref{eq:no_iter} gives us
\begin{align}
d_{min}(\cC'') \leq n - k +1 - t\ceilb{\frac{k- 1}{tr - t + 1}}.
\end{align}
Using the fact that $d_{\min}(\cC) \leq d_{\min}(\cC'') = d_{\min}(\cC')$, we obtain
\begin{align}
d_{\min}(\cC) \leq n - k +1 - t\ceilb{\frac{k- 1}{tr - t + 1}}.
\end{align}
Next, we can use $t\ceilb{\frac{k- 1}{tr - t + 1}} \geq \ceilb{\frac{kt- t + 1}{tr - t + 1}} - 1$~\cite{WangZhang} to claim that
\begin{align}
d_{\min}(\cC) \leq n - k + 1 - \left(\ceilb{\frac{kt - t + 1}{tr - t + 1}} - 1 \right).
\end{align}
This completes the proof.
\end{proof}

\end{document}